\documentclass[9pt, a4paper,twocolumn]{article} 

\usepackage[style=authoryear,backend=biber]{biblatex}
\usepackage{tabularx}
\usepackage{setspace}
\usepackage{subfloat}
\usepackage{mathtools}
\usepackage{amsmath,amsfonts,amssymb,amsthm,mathrsfs, bm}
\usepackage[utf8]{inputenc}
\usepackage[swedish, english]{babel}
\usepackage{csquotes}
\usepackage{subfig}
\usepackage{graphicx}
\usepackage{supertabular}                                             
\usepackage[table]{xcolor} 

\usepackage{graphicx}
\usepackage{epstopdf}
\usepackage{microtype}
\usepackage{multirow}
\usepackage{hyperref}
\usepackage{xcolor}
\usepackage{authblk}
\usepackage{datetime2}	
\usepackage[left=0.5in, right=0.5in, top=0.8in, bottom=0.8in]{geometry} 
\usepackage{setspace}	
\theoremstyle{plain}
\newtheorem{theorem}{Theorem}[section]
\newtheorem{lemma}{Lemma}[section]

\newtheorem{proposition}{Proposition}[section]
\newtheorem{assumption}{Assumption}[section]

\newtheorem{parametrization}{Parametrization}[section]

\usepackage{amsmath,amsfonts,amssymb,amsthm,mathrsfs, bm}
\usepackage{stmaryrd}

\usepackage{graphicx}
\usepackage{tikz}
\usepackage[utf8]{inputenc}
\usepackage{pgfplots} 
\usepackage{pgfgantt}
\usepackage{pdflscape}
\pgfplotsset{compat=newest} 
\pgfplotsset{plot coordinates/math parser=false}

\usepackage{cancel}
\usepackage{endnotes}
\usepackage{commath}
\usepackage{gensymb}
\usepackage{mathtools}
\usepackage{tikz} 
\usetikzlibrary{calc,quotes,angles}
\usepackage{tkz-euclide}
\usetikzlibrary{automata,positioning} 

\usepackage{xcolor}
\usepackage{tikz,tkz-euclide,siunitx}
\usepackage{etoolbox} 

\usetikzlibrary{quotes,arrows.meta,angles,calc,shadings,positioning,babel}

\newtheorem{remark}{Remark}

\usepackage{endnotes}
\usepackage{commath}
\usepackage{gensymb}

\makeatletter
\patchcmd{\tkz@DrawLine}{\begingroup}{\begingroup\makeatletter}{}{}
\makeatother

\allowdisplaybreaks

\DeclareMathOperator{\RE}{Re}

\DeclareMathOperator{\diag}{\mathrm{diag}}

\makeatletter
\newcommand\makebig[2]{%
  \@xp\newcommand\@xp*\csname#1\endcsname{\bBigg@{#2}}%
  \@xp\newcommand\@xp*\csname#1l\endcsname{\@xp\mathopen\csname#1\endcsname}%
  \@xp\newcommand\@xp*\csname#1r\endcsname{\@xp\mathclose\csname#1\endcsname}%
}
\makeatother

\providecommand*{\ped}[1]{%
\ensuremath{_\textnormal{#1}}}

\providecommand*{\eu}%
{\ensuremath{\mathrm{e}}}
\providecommand*{\GammaF}%
{\ensuremath{\mathrm{\Gamma}}}
\providecommand*{\BetaF}%
{\ensuremath{\mathrm{\Beta}}}
\usepackage{bbm}
\makebig{biggg} {3.0}
\makebig{Biggg} {3.5}
\makebig{bigggg}{4.0}
\makebig{Bigggg}{4.5}
\makebig{biggggg}{5.0}
\makebig{Biggggg}{5.5}
\usepackage{longtable}
\usepackage{xcolor}
\DeclareMathSymbol{\Gamma}{\mathalpha}{letters}{"00}
\DeclareMathSymbol{\Delta}{\mathalpha}{letters}{"01}
\DeclareMathSymbol{\Theta}{\mathalpha}{letters}{"02}
\DeclareMathSymbol{\Lambda}{\mathalpha}{letters}{"03}
\DeclareMathSymbol{\Xi}{\mathalpha}{letters}{"04}
\DeclareMathSymbol{\Pi}{\mathalpha}{letters}{"05}
\DeclareMathSymbol{\Sigma}{\mathalpha}{letters}{"06}
\DeclareMathSymbol{\Upsilon}{\mathalpha}{letters}{"07}
\DeclareMathSymbol{\Phi}{\mathalpha}{letters}{"08}
\DeclareMathSymbol{\Psi}{\mathalpha}{letters}{"09}
\DeclareMathSymbol{\Omega}{\mathalpha}{letters}{"0A}

\usepackage{hyperref}
\hypersetup{
    colorlinks = true,
    linkbordercolor = {white},
            linkcolor=blue,
            bookmarks=true,
            filecolor=blue,
            urlcolor=blue,
            citecolor=blue
}
\usepackage{cuted}%
\DeclareMathAlphabet{\altmathcal}{OMS}{cmsy}{m}{n}
\DeclareSymbolFont{cmletters}{OML}{cmm}{m}{it}

\newenvironment{automaticaabstract}
{\par\small\noindent\textbf{Abstract.} }
{\par}

\newcommand{\automatickeywords}[1]{%
\par\small\noindent\textit{Keywords:} #1\par
}

\begin{document}
\title{Stability and stabilization of semilinear single-track vehicle models with distributed tire friction dynamics via singular perturbation analysis}
\date{}
\author[a,b,c]{Luigi Romano\thanks{Corresponding author. Email: luigi.romano@liu.se.}}
\author[b]{Ole Morten Aamo}
\author[c]{Miroslav Krstić}
\author[a]{Jan Aslund}
\author[a]{Erik Frisk}
\affil[a]{\footnotesize{Department of Electrical Engineering, Linköping University, SE-581 83 Linköping, Sweden}}
\affil[b]{\footnotesize{Department of Engineering Cybernetics, Norwegian University of Science and Technology, O. S. Bragstads plass 2, NO-7034, Trondheim, Norway}}
\affil[c]{\footnotesize{Department of Mechanical and Aerospace Engineering, University of California San Diego, La Jolla, CA, 92093, USA}}

\maketitle

\begin{strip}
    \centering
    \begin{minipage}{.8\textwidth}
\begin{automaticaabstract}
This paper investigates the stability and stabilization of semilinear single-track vehicle models with distributed tire friction dynamics, modeled as interconnections of ordinary differential equations (ODEs) and hyperbolic partial differential equations (PDEs). Motivated by the long-standing practice of neglecting transient tire dynamics in vehicle modeling and control, a rigorous justification is provided for such simplifications using singular perturbation theory. A perturbation parameter, defined as the ratio between a characteristic rolling contact length and the vehicle's longitudinal speed, is introduced to formalize the time-scale separation between rigid-body motion and tire dynamics. For sufficiently small values of this parameter, it is demonstrated that standard finite-dimensional techniques can be applied to analyze the local stability of equilibria and to design stabilizing controllers. Whilst the proposed controllers build on classical approaches, the novelty of this work lies in establishing the first singular perturbation framework for ODE-PDE vehicle models with distributed tire dynamics, providing a theoretical justification for their quasi-static reduction and for the use of finite-dimensional tools for analysis and control design.
\end{automaticaabstract}
        \hspace{0.5cm}

\begin{small}
\automatickeywords{Vehicle dynamics; distributed friction models; distributed parameter systems; hyperbolic ODE-PDE systems; semilinear systems; singular perturbation theory}
\end{small}
    \end{minipage}
\end{strip}


\section{Introduction}
\label{sec:introduction}

The analysis and control of road vehicles have long been central topics in both automotive and control engineering. A key aspect of these studies is the modeling of tire–road interaction, which fundamentally determines the stability, handling, and safety of road vehicles. In most of the classical literature, the transient dynamics of tires are neglected, and tires are instead modeled as static nonlinearities that map slip conditions into forces \parencite{Pacejka2,Guiggiani,LibroMio}. This modeling choice has been extremely influential: it simplifies analysis, enables the adoption of low-order vehicle models, and underpins many of the widely adopted techniques for stability analysis, control design, and observer synthesis \parencite{Mojtaba1,Mojtaba3,Shao1,Shao3,Basilio2,Basilio4,LateralControl,Savaresi,Gerdes3,IEEEVT1,IEEEVT2,IEEEVT3}.

As explained in \textcite{Takacs5}, the informal justification behind this simplification lies in a perceived separation of time scales: vehicle rigid-body dynamics evolve relatively slowly compared to the fast rolling contact phenomena taking place inside the tire patch. Building on this consideration, it becomes reasonable to treat tire forces as quasi-stationary functions of slip \parencite{Pacejka2,Guiggiani}, thereby enabling the application of well-established finite-dimensional control methods. The validity of this approximation appears evident if not pleonastic: a vast corpus of literature reports numerical validations, experimental verifications, and successful estimator \parencite{Mojtaba1,Mojtaba3,Shao1,Shao3} and controller implementations \parencite{Basilio2,Basilio4,LateralControl,Savaresi,Gerdes3,IEEEVT1,IEEEVT2,IEEEVT3} based on static tire models. Such models continue to serve as the backbone for advanced driver assistance systems, observer design for lateral dynamics, and even the development of automated driving technologies.

Despite this practical success, the lack of a rigorous theoretical foundation for neglecting tire transients has remained a notable gap. The separation-of-time-scales argument, whilst intuitively appealing, has not been systematically formalized in the context of stability and control of vehicle models. The omission becomes especially relevant when operating conditions depart from those for which static models are most accurate. For instance, recent research by \textcite{Takacs5,Takacs2,Takacs1,Takacs3,Beregi1,Beregi3,BicyclePDE} has highlighted differences between predictions obtained using conventional static tire models and those incorporating transient, distributed tire dynamics. These discrepancies are particularly pronounced at low rolling speeds, where the time scales of tire deformation and rigid-body motion are no longer well separated. Under these conditions, transient tire behavior manifests in nontrivial dynamical effects, including oscillatory instabilities and micro-shimmy phenomena, that static models cannot capture.

To address these limitations, distributed tire models expressed in terms of semilinear hyperbolic partial differential equations (PDEs) have been proposed in \textcite{LibroMio,TsiotrasConf,Tsiotras1,Tsiotras2,Deur0,Deur1,Deur2}. These models capture the spatiotemporal evolution of frictional stresses inside the contact patch and naturally account for the nonstationary generation of longitudinal and lateral forces. Incorporating such descriptions into complete vehicle models yields ODE-PDE interconnections, where the vehicle’s rigid-body states interact with the distributed dynamics of the tire-road interface \parencite{SemilinearV}. Whilst this modeling framework is more realistic and can reproduce subtle dynamical phenomena, it introduces new analytical and control-theoretic challenges: questions of well-posedness, stability, and controller synthesis would involve infinite-dimensional systems \parencite{Weiss,Zwart}, with the established toolbox of finite-dimensional methods being not directly applicable.

Over the past two decades, significant progress has been made in the control of PDE and ODE-PDE interconnections, with powerful techniques developed in areas such as chemical process control, flexible structures, and flow dynamics \parencite{Coron,KrsticBook1,KrsticBook2,OleBook}. Notably, singular perturbation methods, typically restricted to ODEs, have been extended to analyze and stabilize hyperbolic PDE and ODE-PDE systems with multiple time scales in \textcite{Vazquez,Tikhonov1,Tikhonov2,Tikhonov3}. However, the application of such methods to vehicle dynamics has been limited, and a systematic connection between distributed tire models and standard vehicle dynamics theory has been missing.

In this context, the present paper builds upon the distributed models introduced in \textcite{SemilinearV} and on the dissipativity analysis conducted in \textcite{PassExp}. However, neither of these works addressed singular perturbation theory, reduced-order modeling, or the rigorous justification of finite-dimensional vehicle-control methodologies. Indeed, in \textcite{SemilinearV}, the main focus was the derivation and well-posedness analysis of semilinear ODE-PDE vehicle models with distributed tire dynamics, whereas in \textcite{PassExp}, passivity properties were exploited to design fully infinite-dimensional stabilizing controllers. By contrast, the objective of this work is fundamentally different: to establish a singular-perturbation framework connecting distributed-parameter vehicle models with the finite-dimensional tools routinely used in automotive control. This yields rigorous conditions under which classical stability analyses and controller design strategies remain valid despite the presence of distributed friction dynamics. Although some technical arguments inevitably share similarities with those contained in \textcite{SemilinearV, PassExp}, the questions addressed, the methodology adopted, and the resulting theoretical conclusions are substantially different.

Specifically, by leveraging singular perturbation analysis, this paper demonstrates that, for sufficiently small values of a time-scale parameter -- defined as the ratio between a characteristic length scale of the rolling contact process and the vehicle's longitudinal speed -- semilinear ODE-PDE vehicle models can be approximated by reduced finite-dimensional subsystems. In this regime, classical tools of nonlinear stability analysis and controller design can be reliably applied. Importantly, this result formalizes the heuristic justification for neglecting transient tire dynamics at sufficiently high velocities, whilst simultaneously clarifying the conditions under which such simplifications fail.
Building on this foundation, the stabilization problem is also addressed, relying on standard stabilizability assumptions. Although the proposed control strategies themselves are not new (being, in fact, the simplest known), the contribution of this work lies in establishing that their use is theoretically justified for ODE-PDE vehicle models under explicit conditions. In this way, the paper reconciles decades of empirical and simulation-based validation of static tire models with a rigorous mathematical framework.

The contributions of this paper are twofold. First, it provides novel singular-perturbation-based stability results for semilinear vehicle models with distributed tire dynamics, thereby extending the theoretical understanding of vehicle stability into the infinite-dimensional setting. Second, it demonstrates that widely adopted finite-dimensional design methods for vehicle control can be systematically justified when tire dynamics evolve on faster time scales than the rigid-body motion. Albeit derived using simplified ODE-PDE vehicle models, these contributions open new avenues for the analysis and control of ODE-PDE systems with distributed friction.

The remainder of the paper is organized as follows. Section~\ref{sect:Problem} recapitulates the family of semilinear single-track vehicle models with distributed tire friction dynamics considered throughout the manuscript and postulates certain structural assumptions. A stability analysis is then conducted in Section~\ref{sect:stabl} using singular perturbation theory, first by isolating the reduced ODE and boundary-layer PDE subsystems, and then by deriving local stability results for sufficiently small time-scale parameters $\tau_z$. Section~\ref{sect:Controllllll} addresses the stabilization problem. Numerical simulations are discussed in Section~\ref{sect:numerical}. Finally, Section ~\ref{sect:conclusion} concludes the paper with remarks on implications for vehicle dynamics and directions for future research.


\subsection*{Notation}
In this paper, $\mathbb{R}$ denotes the set of real numbers; $\mathbb{R}_{>0}$ and $\mathbb{R}_{\geq 0}$ indicate the set of positive real numbers excluding and including zero, respectively. 
The set of $n\times m$ matrices with values in $\mathbb{F}$ ($\mathbb{F} = \mathbb{R}$, $\mathbb{R}_{>0}$, or $\mathbb{R}_{\geq0}$) is denoted by $\mathbf{M}_{n\times m}(\mathbb{F})$ (abbreviated as $\mathbf{M}_{n}(\mathbb{F})$ whenever $m=n$). $\mathbf{GL}_n(\mathbb{R})$ and $\mathbf{Sym}_n(\mathbb{R})$ represents the groups of invertible and symmetric matrices, respectively, with values in $\mathbb{R}$; the identity matrix on $\mathbb{R}^n$ is indicated with $I_n$. A positive-definite matrix is denoted as $\mathbf{M}_n(\mathbb{R}) \ni Q \succ 0$.
The standard Euclidean norm on $\mathbb{R}^n$ is indicated with $\norm{\cdot}_2$; matrix norms are simply denoted by $\norm{\cdot}$.
$L^2((0,1);\mathbb{R}^n)$ denotes the Hilbert space of square-integrable functions on $(0,1)$ with values in $\mathbb{R}^n$, endowed with inner product $\langle \zeta_1, \zeta_2 \rangle_{L^2((0,1);\mathbb{R}^n)} = \int_0^1 \zeta_1^{\mathrm{T}}(\xi)\zeta_2(\xi) \dif \xi$ and induced norm $\norm{\zeta(\cdot)}_{L^2((0,1);\mathbb{R}^n)}$. The Hilbert space $H^1((0,1);\mathbb{R}^n)$ consists of functions $\zeta\in L^2((0,1);\mathbb{R}^n)$ whose weak derivative also belongs to $L^2((0,1);\mathbb{R}^n)$; it is naturally equipped with norm $\norm{\zeta(\cdot)}_{H^1((0,1);\mathbb{R}^n)}^2 \triangleq \norm{\zeta(\cdot)}_{L^2((0,1);\mathbb{R}^n)}^2 + \norm{\pd{\zeta(\cdot)}{\xi}}_{L^2((0,1);\mathbb{R}^n)}^2$. For a matrix-valued function $K(\xi)$, $\norm{K(\cdot)}_\infty \triangleq \sup_{\xi \in [0,1]}\norm{K(\xi)}$. $C^k([0,T];\altmathcal{Z})$ ($k \in \{1, 2, \dots, \infty\}$) denotes the space of $k$-times continuously differentiable functions on $[0,T]$ with values in $\altmathcal{Z}$ (for $T = \infty$, the interval $[0,T]$ is identified with $\mathbb{R}_{\geq 0}$). Given two Hilbert spaces $\altmathcal{V}$ and $\altmathcal{W}$, $\mathscr{L}(\altmathcal{V};\altmathcal{W})$ denotes the spaces of linear operators from $\altmathcal{V}$ to $\altmathcal{W}$ (abbreviated $\mathscr{L}(\altmathcal{V})$ if $\altmathcal{V} = \altmathcal{W}$). 


\section{Model description and preliminaries}\label{sect:Problem}
This section is dedicated to introducing the considered family of semilinear single-track models, along with the main assumptions formulated about their dynamics. In particular, the governing equations of the model are presented in Section~\ref{sect:modelSed}, whereas mild structural assumptions are postulated in Section~\ref{sect:ass}, where some preliminary results are also collected.

\subsection{Model description}\label{sect:modelSed}
In the following, Section~\ref{sect:exampleEqs} reviews the governing equations of the semilinear single-track models to the extent that is necessary to understand the manuscript, whereas Section~\ref{sect:stateSpace} introduces a compact state-space representation more amenable to mathematical analysis. 

\subsubsection{Lateral vehicle dynamics with distributed tire friction dynamics}\label{sect:exampleEqs} 
As illustrated schematically in Figure~\ref{figureForcePostdoc}, this paper examines semilinear single-track models that govern the lateral dynamics of a road vehicle traveling at a constant cruising speed. The formulations presented here consider tire models with both a rigid and flexible carcass and are adapted from \textcite{SemilinearV}, but with some state variables restated in a nondimensional form that is more amenable to singular perturbation analysis. 
\begin{figure}
\centering
\includegraphics[width=0.8\linewidth]{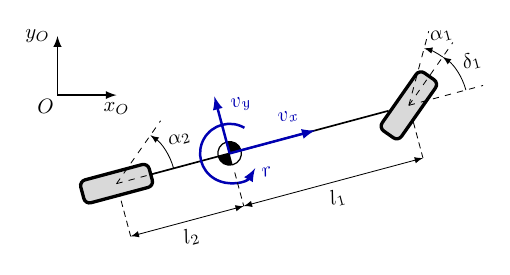} 
\caption{Single-track vehicle model.}
\label{figureForcePostdoc}
\end{figure}
In particular, for sufficiently small steering inputs, the linear ODE describing the rigid vehicle dynamics may be deduced to be \parencite{Guiggiani}
\begin{subequations}\label{eq:rigid}
\begin{align}
\dot{\beta}(t) & = -\dfrac{1}{mv_x}\bigl(F_{y1}(t) + F_{y2}(t)\bigr) -r(t), \\
\dot{r}(t) & = -\dfrac{1}{I_z}\bigl( l_1F_{y1}(t)-l_2F_{y2}(t)\bigr), && t\in (0,T),
\end{align}
\end{subequations}
where the lumped states $\mathbb{R} \ni \beta(t) \triangleq v_y(t)/v_x$ and $r(t) \in \mathbb{R}$ are the vehicle sideslip angle and yaw rate, being $v_y(t) \in \mathbb{R}$ and $v_x\in \mathbb{R}_{>0}$ its lateral and constant longitudinal speeds\footnote{When studying single-track vehicle models, the assumption of a constant longitudinal speed is standard \parencite{Guiggiani}, and closely related to neglecting the longitudinal slips.}, $m\in \mathbb{R}_{>0}$ and $I_z\in \mathbb{R}_{>0}$ denote respectively the vehicle mass and moment of inertia of the center of gravity around the vertical axis, and $l_1$, $l_2 \in \mathbb{R}_{>0}$ are the front and rear axle lengths. In turn, adopting a distributed friction model \parencite{DistrLuGre,FrBD}, the tire forces $F_{y1}(t)$, $F_{y2}(t) \in \mathbb{R}$ may be calculated as
\begin{align}\label{eq:Fi}
\begin{split}
F_{yi}(t) & = F_{zi}\int_0^1 \bar{p}_i(\xi)\biggl[\bar{\sigma}_{0,i}z_i(\xi,t)+\bar{\sigma}_{1,i}\dod{z_i(\xi,t)}{t}\biggr] \dif \xi \\
& \quad + 2F_{zi}\bar{\sigma}_{2,i}\alpha_i\bigl(\beta(t),r(t), \delta_i(t)\bigr), \quad i \in \{1,2\},
\end{split} 
\end{align}
where the distributed state $z_i(\xi,t) \in \mathbb{R}$, $i = \{1,2\}$, represents the nondimensional deflection of a bristle schematizing a rubber element within the contact patch, $\alpha_i(\beta(t),r(t), \delta_i(t)) \in \mathbb{R}$ the corresponding apparent slip angle, $F_{zi} \in \mathbb{R}_{>0}$ denotes the vertical force acting on the tire, $\bar{p}_i \in C^1([0,1];\mathbb{R}_{\geq 0})$ is the nondimensional vertical pressure distribution, and
\begin{subequations}
\begin{align}
\bar{\sigma}_{0,i} & \triangleq \sigma_{0,i}L_i, \\
\bar{\sigma}_{1,i} & \triangleq \sigma_{1,i}L_i, \\
 \bar{\sigma}_{2,i} & \triangleq \sigma_{2,i}v_x, \quad i \in \{1,2\},
\end{align}
\end{subequations}
being $\sigma_{0,i} \in \mathbb{R}_{>0}$ the normalized micro-stiffness coefficient, $\sigma_{1,i} \in \mathbb{R}_{\geq 0}$ the normalized micro-damping coefficient, and $\sigma_{2,i}\in \mathbb{R}_{\geq 0}$ the viscous coefficient \parencite{DistrLuGre}. In turn, for a tire with a rigid carcass, the bristle dynamics obeys the following semilinear PDE \parencite{SemilinearV}:
\begin{subequations}\label{Eq:PDEz}
\begin{align}\label{Eq:PDEzDKKF}
\begin{split}
& \dfrac{\bar{L}}{v_x}\dpd{z_i(\xi,t)}{t} + \dfrac{1}{\bar{L}_i}\dpd{z_i(\xi,t)}{\xi} =-\dfrac{1}{\bar{L}_i}\dfrac{\bar{\sigma}_{0,i}\abs{\alpha_i\bigl(\beta(t),r(t), \delta_i(t)\bigr)}_\varepsilon}{\bar{g}_i\bigl(\alpha_i\bigl(\beta(t),r(t), \delta_i(t)\bigr)\bigr)}\\
& \quad \times z_i(\xi,t)  +\dfrac{2}{\bar{L}_i}\dfrac{\bar{\mu}_i\bigl(\alpha_i\bigl(\beta(t),r(t), \delta_i(t)\bigr)\bigr)}{\bar{g}_i\bigl(\alpha_i\bigl(\beta(t),r(t), \delta_i(t)\bigr)\bigr)} \alpha_i\bigl(\beta(t),r(t), \delta_i(t)\bigr), \\
& \qquad \qquad \qquad \qquad \qquad \qquad (\xi,t) \in (0,1)\times (0,T),
\end{split} \\
& z_i(0,t) = 0, \label{eq:BCz}
\end{align}
\end{subequations}
with
\begin{align}
\bar{L}_i \triangleq \dfrac{L_i}{\bar{L}}, \quad i \in \{1,2\},
\end{align}
where $L_i \in \mathbb{R}_{>0}$ indicates the contact patch length, $\bar{\mu}_i \in C^0(\mathbb{R};[\mu\ped{min},\infty))$, with $\mu\ped{min}\in \mathbb{R}_{>0}$, the friction coefficient as a function of the slip angle, and the function $\abs{\cdot}_\varepsilon \in C^0(\mathbb{R};\mathbb{R}_{\geq 0})$ denotes the (possibly regularized\footnote{It is common in engineering practice to replace the absolute value with differentiable functions $\abs{\cdot}_\varepsilon \in C^1(\mathbb{R};\mathbb{R}_{\geq 0})$ \parencite{Rill,Rill0}, such as $\abs{v}_\varepsilon \triangleq \sqrt{v^2 + \varepsilon}$, for some $\varepsilon \in \mathbb{R}_{>0}$. This paper considers $\varepsilon \in \mathbb{R}_{\geq 0}$.}) absolute value, converging uniformly to $\abs{\cdot}$ in $C^0(\mathbb{R};\mathbb{R}_{\geq 0})$ for $\varepsilon \to 0$. Distributed friction models accommodated by~\eqref{eq:Fi} and~\eqref{Eq:PDEz} include the Dahl model, as well as the LuGre and FrBD formulations with the damping term modeled as a linear function of the total time derivative of the bristle deformation\footnote{Model variants replacing the total time derivative with the partial one in the definition of the damping term in~\eqref{eq:Fi} do not fit the proposed singular perturbation framework. In this context, it is, however, worth mentioning that the partial time derivative does not represent a real deformation velocity, as opposed to the total one. Besides, there are also several mathematical arguments in favor of the adoption of the total time derivative, as extensively discussed in \textcite{DistrLuGre,FrBD}.}.

Alternatively, for a tire with a flexible tire carcass, and $\bar{\sigma}_{1,i} = \bar{\sigma}_{2,i} = 0$, $i \in \{1,2\}$, in~\eqref{eq:Fi} (which also implies $\bar{g}_i(\cdot) = \bar{\mu}_i(\cdot)$, see \textcite{SemilinearV}), the PDE governing the bristle dynamics may be deduced as follows: 
\begin{subequations}\label{Eq:PDEzFlex}
\begin{align}\label{Eq:PDEzDKKFFlex}
\begin{split}
& \dfrac{\bar{L}}{v_x}\dpd{z_i(\xi,t)}{t} + \dfrac{1}{\bar{L}_i}\dpd{z_i(\xi,t)}{\xi} =-\dfrac{1}{\bar{L}_i}\dfrac{\bar{\sigma}_{0,i}\abs{\alpha_i\bigl(\beta(t),r(t), \delta_i(t)\bigr)}_\varepsilon}{\bar{\mu}_i\bigl(\alpha_i\bigl(\beta(t),r(t), \delta_i(t)\bigr)\bigr)}\\
& \quad \times \Biggl(z_i(\xi,t)-\psi_i\int_0^1\bar{p}_i(\xi)z_i(\xi) \dif \xi\Biggr)  \\
& \quad +\dfrac{\psi_i}{\bar{L}_i}\Biggl( \bar{p}_i(1)z_i(1) - \int_0^1 \dod{\bar{p}_i(\xi)}{\xi}z_i(\xi)\dif \xi\Biggr) \\
& \quad +2 \dfrac{\phi_i}{\bar{L}_i} \alpha_i\bigl(\beta(t),r(t), \delta_i(t)\bigr), \quad (\xi,t) \in (0,1)\times (0,T),
\end{split} \\
& z_i(0,t) = 0, \label{eq:BCzFlex}
\end{align}
\end{subequations}
where the constants $\phi_i \in (0,1]$ and $\psi_i \in [0,1)$ are structural parameters connected with the flexibility of the tire carcass, identically satisfying $\phi_i + \psi_i = 1$, $i \in \{1,2\}$.

Finally, the parameter $\bar{L}\in \mathbb{R}_{>0}$ appearing in both~\eqref{Eq:PDEzDKKF} and~\eqref{Eq:PDEzDKKFFlex} represents a characteristic length of the problem~\eqref{Eq:PDEz}, which may be defined as $\bar{L} \triangleq (L_1+L_2)/2$ or $\bar{L} \triangleq \max\{L_1,L_2\}$.

The apparent slip angles in~\eqref{eq:Fi},~\eqref{Eq:PDEzDKKF}, and~\eqref{Eq:PDEzDKKFFlex} are given by
\begin{subequations}\label{eq:slipAngles}
\begin{align}
\begin{split}
\alpha_1\bigl(\beta(t),r(t), \delta_1(t)\bigr) & = \beta(t) + \dfrac{l_1}{v_x}r(t)-\delta_1(t), 
\end{split} \\
\begin{split}
\alpha_2\bigl(\beta(t),r(t), \delta_2(t)\bigr)& =  \beta(t)-\dfrac{l_2}{v_x}r(t)-\chi \delta_2(t),
\end{split}
\end{align}
\end{subequations}
where the steering angles at the front and rear axles, $\delta_1(t), \delta_2(t) \in \mathbb{R}$ respectively, represent the control inputs, and $\chi\in \{0,1\}$ a parameter describing the actuation at the rear wheels.

Equations~\eqref{eq:rigid}-\eqref{eq:slipAngles} completely determine the lateral motion of the single-track model. As explained next in Section~\ref{sect:stateSpace}, they may be restated in a compact form which is more suited to mathematical analysis.

\subsubsection{State-space representation}\label{sect:stateSpace} 
The objective of this paper consists of studying the (local) stability and stabilization of the semilinear single-track models introduced in Section~\ref{sect:exampleEqs}, by exploiting the time-scale separation between the slow ODE and fast PDE subsystems, respectively. Indeed, at sufficiently high longitudinal speeds $v_x$, the tire dynamics evolve much faster than the rigid body ones. This informal argument has traditionally motivated studying the stability of road vehicles using reduced-order descriptions that approximate the transient tire forces with their steady-state solutions. In a similar context, controllers and state observers are typically designed by neglecting tire dynamics. 

In this paper, the time-scale parameter $\tau_z \in \mathbb{R}_{>0}$ is introduced as
\begin{align}\label{eq:epsilonDef}
\tau_z \triangleq \dfrac{\bar{L}}{v_x},
\end{align}
and represents the average/maximum time required for tire particles to traverse the contact patches of the front and rear axles. In this context, it is worth emphasizing that the precise definition of $\bar{L}$ in~\eqref{eq:epsilonDef} is of limited significance for the scope of this paper, since the tire contact patch lengths are comparable under normal operating conditions. Further considerations on this aspect are provided below.
\begin{remark}
Typically, $\bar{L} <1$ may be treated as a fixed parameter under certain operating conditions (although it may depend on inflation pressure, vehicle weight, etc.), whereas $v_x$ may vary over a broad range. Consequently, the time constant $\tau_z$ mostly depends on the vehicle's cruising speed. In this context, for low values of $v_x$, $\tau_z$ may become relatively large (see also Remarks~\ref{remark:tveh} and~\ref{remark:tveh2}), causing tire transients to dominate over steady-state dynamics and thus compromising the validity of the perturbation approach, as discussed more extensively in Sections~\ref{sect:stabl} and~\ref{sect:Controllllll}.
\end{remark}

Using~\eqref{eq:epsilonDef}, and defining $\mathbb{R}^2 \ni X(t) \triangleq [\beta(t)\; r(t)]^{\mathrm{T}}$, $\mathbb{R}^2\ni z(\xi,t) \triangleq [z_1(\xi,t) \; z_2(\xi,t)]^{\mathrm{T}}$, $\mathbb{R}^2 \ni U(t) \triangleq [\delta_1(t)\; \delta_2(t)]^{\mathrm{T}}$, and $\mathbb{R}^2 \ni \alpha(X(t),U(t)) = [\alpha_1(X(t),U(t))\; \alpha_2(X(t),U(t))]^{\mathrm{T}} \triangleq [\alpha_1(\beta(t),r(t), \delta_1(t)) \; \alpha_2(\beta(t),r(t), \delta_2(t))]^{\mathrm{T}}$, both the model variants with rigid and flexible tire carcass may be recast in the form 
\begin{subequations}\label{eq:originalSystems}
\begin{align}
\begin{split}
& \dot{X}(t) =A_1X(t)+ G_1\Bigl[(\mathscr{K}_1z)(t)+\Sigma\bigl(\alpha\bigl(X(t),U(t)\bigr)\bigr)(\mathscr{K}_2z)(t)\Bigr] \\
& \qquad \qquad+G_1h_1\bigl(\alpha(X(t),U(t))\bigr),\quad t \in (0,T),
\end{split} \label{eq:originalSystemsODE}\\
\begin{split}
& \tau_z\dpd{z(\xi,t)}{t} + \Lambda \dpd{z(\xi,t)}{\xi} = \Sigma\Bigl(\alpha\bigl(X(t),U(t)\bigr)\Bigr) \\
& \qquad \qquad \times  \bigl[z(\xi,t) + (\mathscr{K}_3z)(t)\bigr]  +(\mathscr{K}_4z)(t) \\
& \qquad \qquad  +h_2\Bigl(\alpha\bigl(X(t),U(t)\bigr)\Bigr),\quad (\xi,t) \in (0,1) \times (0,T),
\end{split} \label{eq:originalSystemsPDE} \\
&z(0,t) = 0, \quad t \in (0,T),\label{eq:originalSystemsBC}
\end{align}
\end{subequations}
where the apparent slip angle vector $\alpha \in C^1(\mathbb{R}^{4};\mathbb{R}^{2})$ may be expressed as
\begin{align}\label{eq:relVel}
\alpha(X,U) \triangleq A_2X + G_2U.
\end{align}
In~\eqref{eq:originalSystems} and~\eqref{eq:relVel}, the matrix $\mathbf{GL}_{2}(\mathbb{R})\cap \mathbf{Sym}_{2}(\mathbb{R})\ni \Lambda \succ 0$ collects the transport velocities, $\Sigma \in C^0(\mathbb{R}^{2};\mathbf{M}_{2}(\mathbb{R}))$ represents the nonlinear source matrix, $A_1, A_2, G_1, G_2 \in  \mathbf{M}_{2}(\mathbb{R})$ are matrices with constant coefficients, and $h_1, h_2 \in C^0(\mathbb{R}^{2 };\mathbb{R}^{2 })$ are vector-valued functions. Finally, the operators $(\mathscr{K}_1\zeta)$, $(\mathscr{K}_2\zeta)$, $(\mathscr{K}_3\zeta)$, and $(\mathscr{K}_4\zeta)$ satisfy $\mathscr{K}_1, \mathscr{K}_2, \mathscr{K}_3\in \mathscr{L}(L^2((0,1);\mathbb{R}^{2});\mathbb{R}^{2})$, and $\mathscr{K}_4\in \mathscr{L}(H^1((0,1);\mathbb{R}^{2});\mathbb{R}^{2})$, with
\begin{subequations}\label{eq:operatorK}
\begin{align}
(\mathscr{K}_i\zeta) &\triangleq \int_0^1 K_i(\xi)  \zeta(\xi) \dif \xi, \quad i \in \{1,2,3\},\\
(\mathscr{K}_4\zeta) &\triangleq \int_0^1 K_4(\xi)  \zeta(\xi) \dif \xi+K_5\zeta(1),
\end{align}
\end{subequations}
where $K_1,K_2, K_3,K_4\in C^0([0,1];\mathbf{M}_{2}(\mathbb{R}))$, and $K_5 \in \mathbf{M}_{2}(\mathbb{R})$.
Equations~\eqref{eq:originalSystems}-\eqref{eq:operatorK} accommodate semilinear single-track vehicle models with both a rigid and flexible tire carcass, as formalized below in Parametrizations~\ref{param:1} and~\ref{param:2}.

\begin{parametrization}[Semilinear single-track models with a rigid tire carcass]\label{param:1}
Semilinear single-track models with a rigid carcass admit a state-space representation in the form described by~\eqref{eq:originalSystems}-\eqref{eq:operatorK}, with \begin{small}
\begin{align}\label{eq:MatricesOOO}
A_1 & \triangleq \begin{bmatrix} 0 & -1 \\ 0 &  0 \end{bmatrix}, \quad G_1 \triangleq -\begin{bmatrix}\dfrac{1}{mv_x} & \dfrac{1}{mv_x} \\ \dfrac{l_1}{I_z} & -\dfrac{l_2}{I_z} \end{bmatrix}, \nonumber\\
K_1(\xi) & \triangleq \begin{bmatrix} F_{z1}\bar{\sigma}_{0,1}\bar{p}_1(\xi) & 0 \\ 0 & F_{z2}\bar{\sigma}_{0,2}\bar{p}_2(\xi)\end{bmatrix}, \nonumber\\
K_2(\xi) & \triangleq \begin{bmatrix} F_{z1}\bar{\sigma}_{1,1}\bar{p}_1(\xi) & 0 \\ 0 & F_{z2}\bar{\sigma}_{1,2}\bar{p}_2(\xi)\end{bmatrix}, \nonumber\\
\Lambda & \triangleq \begin{bmatrix} \dfrac{1}{\bar{L}_1} & 0 \\ 0 & \dfrac{1}{\bar{L}_2}\end{bmatrix}, \quad \Sigma(\alpha)  \triangleq \begin{bmatrix} -\dfrac{\bar{\sigma}_{0,1}\abs{\alpha_1}_\varepsilon}{\bar{L}_1\bar{g}_1(\alpha_1)} & 0 \\0 & -\dfrac{\bar{\sigma}_{0,2}\abs{\alpha_2}_\varepsilon}{\bar{L}_2\bar{g}_2(\alpha_2)} \end{bmatrix}, \nonumber\\
h_1(\alpha) & \triangleq 2\begin{bmatrix} F_{z1}\biggl(\bar{\sigma}_{1,1}\dfrac{\bar{\mu}_1(\alpha_1)}{\bar{L}_1\bar{g}_1(\alpha_1)} + \bar{\sigma}_{2,1}\biggr)\alpha_1  \\ F_{z2}\biggl(\bar{\sigma}_{1,2}\dfrac{\bar{\mu}_2(\alpha_2)}{\bar{L}_2\bar{g}_2(\alpha_2)} + \bar{\sigma}_{2,2}\biggr)\alpha_2 \end{bmatrix}, \nonumber \\
h_2(\alpha) & \triangleq 2\begin{bmatrix} \dfrac{\bar{\mu}_1(\alpha_1)}{\bar{L}_1\bar{g}_1(\alpha_1)}\alpha_1 \\ \dfrac{\bar{\mu}_2(\alpha_2)}{\bar{L}_2\bar{g}_2(\alpha_2)}\alpha_2 \end{bmatrix}, 
\end{align}\end{small}
$K_3 = K_4 = K_5=0$, and the matrices in~\eqref{eq:relVel} reading
\begin{align}\label{eq:A2G2}
A_2 & \triangleq \begin{bmatrix}1 & \dfrac{l_1}{v_x} \\ 1 & -\dfrac{l_2}{v_x} \end{bmatrix}, \quad G_2 \triangleq -\begin{bmatrix} 1 & 0 \\ 0 & \chi\end{bmatrix}.
\end{align} 
\end{parametrization}

\begin{parametrization}[Semilinear single-track models with a flexible tire carcass]\label{param:2}
Semilinear single-track models with a flexible tire carcass may be put compactly in the form described by~\eqref{eq:originalSystems}-\eqref{eq:operatorK}, with $A_1$, $G_1$, $K_1$, and $\Lambda$ as in~\eqref{eq:MatricesOOO}, $K_2 =0$, $h_1(\alpha) = 0$, $A_2$ and $G_2$ according to~\eqref{eq:A2G2}, and \begin{small}
\begin{align}\label{eq:MatricesOOO2}
K_3(\xi) & \triangleq -\begin{bmatrix} \psi_{1}\bar{p}_1(\xi) & 0 \\ 0 & \psi_{2}\bar{p}_2(\xi)\end{bmatrix}, \nonumber\\
 K_4(\xi) & \triangleq -\begin{bmatrix} \dfrac{\psi_1}{\bar{L}_1}\dod{\bar{p}_1(\xi)}{\xi} & 0 \\ 0 & \dfrac{\psi_2}{\bar{L}_2}\dod{\bar{p}_2(\xi)}{\xi}\end{bmatrix},  \nonumber\\
 K_5 & \triangleq \begin{bmatrix} \dfrac{\psi_1}{\bar{L}_1}\bar{p}_1(1) & 0 \\ 0 & \dfrac{\psi_2}{\bar{L}_2}\bar{p}_2(1)\end{bmatrix},  \nonumber\\
\Sigma(\alpha) & \triangleq \begin{bmatrix} -\dfrac{\bar{\sigma}_{0,1}\abs{\alpha_1}_\varepsilon}{\bar{L}_1\bar{\mu}_1(\alpha_1)} & 0 \\0 & -\dfrac{\bar{\sigma}_{0,2}\abs{\alpha_2}_\varepsilon}{\bar{L}_2\bar{\mu}_2(\alpha_2)} \end{bmatrix}, \nonumber \\
h_2(\alpha) &  \triangleq 2\begin{bmatrix}\dfrac{\phi_1}{\bar{L}_1} & 0 \\ 0 & \dfrac{\phi_2}{\bar{L}_2}\end{bmatrix}\alpha.
\end{align}\end{small}
\end{parametrization}
From a mathematical perspective, the semilinear hyperbolic ODE-PDE interconnection~\eqref{eq:originalSystems} is (locally) well-posed. In particular, this paper considers the Hilbert space $\altmathcal{X}\triangleq \mathbb{R}^{2}\times L^2((0,1);\mathbb{R}^{2})$, equipped with norm $\norm{(Z, \zeta(\cdot))}_{\altmathcal{X}}^2 \triangleq \norm{Z}_{2}^2 + \norm{\zeta(\cdot)}_{L^2((0,1);\mathbb{R}^{2})}^2$. Accordingly, if $\Sigma \in C^0(\mathbb{R}^{2};\mathbf{M}_{2}(\mathbb{R}))$ and $h_1,h_2 \in C^0(\mathbb{R}^2;\mathbb{R}^2)$ are locally Lipschitz continuous, the existence and uniqueness of \emph{mild solution} $(X,z) \in C^0([0,t\ped{max});\altmathcal{X})$ follows from~\textcite[Theorem~3.3]{SemilinearV} for all inputs $U \in C^0([0,T];\mathbb{R}^{2})$ and initial conditions (ICs) $(X_0,z_0) \triangleq (X(0),z(\cdot,0)) \in \altmathcal{X}$.

A schematic of the ODE-PDE system~\eqref{eq:originalSystems} is illustrated in Figure~\ref{figure:system0}, where, for convenience of notation, $\mathbb{R}^{2} \ni W_1(t) \triangleq (\mathscr{F}_1(X,U,z))(t)$ and $\mathbb{R}^{2} \ni W_2(\xi,t) \triangleq (\mathscr{F}_2(X,U,z))(\xi,t)$, with $ (\mathscr{F}_1(X,U,z))(t) \triangleq G_1(\mathscr{K}_1z)(t) + G_1\Sigma(\alpha(X(t),U(t)))(\mathscr{K}_2z)(t)  + G_1h_1(\alpha(X(t),U(t)))$ and $(\mathscr{F}_2(X,U,z))(\xi,t)\triangleq \Sigma(\alpha(X(t),U(t)))[z(\xi,t)+(\mathscr{K}_3z)(t)] + (\mathscr{K}_4z)(t)+h_2(\alpha(X(t),U(t)))$.

\begin{figure}
\centering
\includegraphics[width=0.9\linewidth]{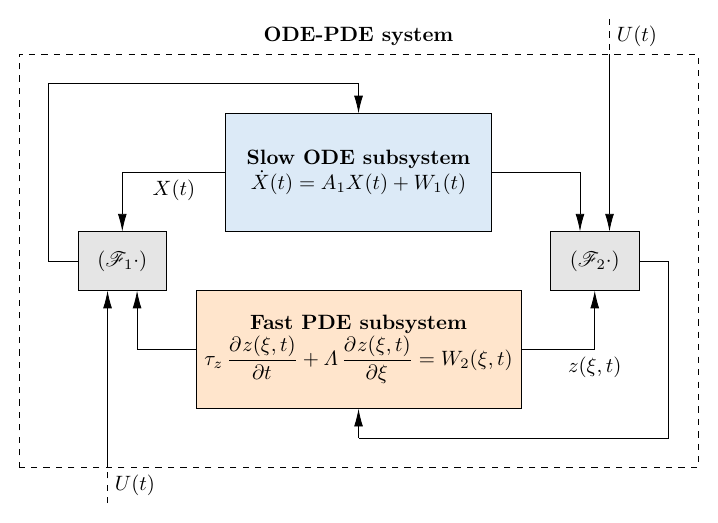} 
\caption{Schematic representation of the ODE-PDE interconnection~\eqref{eq:originalSystems}.}
\label{figure:system0}
\end{figure}

\subsection{Assumptions and preliminaries}\label{sect:ass}
This Section introduces the main assumptions required to synthesize the proposed controller.

\subsubsection{Structural assumptions}
The structural assumptions formulated in this paper concern the fast PDE subsystem~\eqref{eq:originalSystemsPDE}, and reflect the fact that rolling friction is a dissipative phenomenon.
\begin{assumption}[Strict dissipativity]\label{ass:ODEf} There exists $C^0([0,1]; \mathbf{Sym}_{2}(\mathbb{R})) \ni \mathscr{Q} \triangleq \diag\{\mathscr{Q}_1,\mathscr{Q}_2\}$, with $\mathscr{Q}(\xi) \succ 0$, such that the unbounded operator $(\mathscr{A},\mathscr{D}(\mathscr{A}))$, defined by
\begin{subequations}\label{eq:operatorA}
\begin{align}
(\mathscr{A}\zeta)(\xi) & \triangleq -\Lambda\dpd{\zeta(\xi)}{\xi} + (\mathscr{K}_4\zeta)(\xi), \\
\mathscr{D}(\mathscr{A}) & \triangleq \Bigl\{ \zeta \in H^1((0,1);\mathbb{R}^{2}) \mathrel{\Big|} \zeta(0) = 0\Bigr\}, 
\end{align}
\end{subequations}
satisfies
\begin{align}\label{eq:dissA000}
\begin{split}
& \RE\langle \mathscr{A}\zeta,\mathscr{Q}\zeta\rangle_{L^2((0,1);\mathbb{R}^{2})} \leq -\omega \norm{\zeta(\cdot)}_{L^2((0,1);\mathbb{R}^{2})}^2,
\end{split}
\end{align}
for all $\zeta \in \mathscr{D}(\mathscr{A})$ and some $\omega \in \mathbb{R}_{>0}$.
\end{assumption}
Assumption~\ref{ass:ODEf} implies that $(\mathscr{A},\mathscr{D}(\mathscr{A}))$ generates an exponentially stable $C_0$-semigroup on $L^2((0,1);\mathbb{R}^{2})$. Its physical interpretation is that rolling contact processes, when considered in isolation, are stable (in fact, they are even dissipative).

\begin{assumption}[Dissipativity and Lipschitz continuity]\label{ass:ODEf2}
For every $(y,\zeta) \in \mathbb{R}^{2 }\times L^2((0,1);\mathbb{R}^{2 })$, the matrix $\Sigma(y) \in \mathbf{M}_{2}(\mathbb{R})$ satisfies
\begin{align}\label{eq:ineqP}
& \int_0^1 \zeta^{\mathrm{T}}(\xi) \mathscr{Q}(\xi)\Sigma(y)\bigl[\zeta(\xi) + (\mathscr{K}_3\zeta)\bigr] \dif \xi \leq 0.
\end{align}
Additionally, it is globally Lipschitz continuous, that is, there exists $L_{\Sigma}\in \mathbb{R}_{\geq 0}$ such that
\begin{align}\label{eq:SigmaLSigma}
\norm{\Sigma(y_1)-\Sigma(y_2)} & \leq L_{\Sigma}\norm{y_1-y_2}_2, 
\end{align}
for all $y_1,y_2 \in \mathbb{R}^{2}$. 
\end{assumption}

Some comments are in order. First, the dissipativity inequalities~\eqref{eq:dissA000} and~\eqref{eq:ineqP} in Assumptions~\ref{ass:ODEf} and~\ref{ass:ODEf2} are trivial to prove for $\psi_i = 0$, $i \in \{1,2\}$. More generally, the bounds~\eqref{eq:dissA000} and~\eqref{eq:ineqP} may be verified for several combinations of model parameters, for instance by resorting to passivity arguments, and choosing the matrix $\mathscr{Q}(\xi)$ as a multiple of $K_1(\xi)$ (the reader may consult \textcite{PassExp} for further details on this matter). Finally, the Lipschitz condition~\eqref{eq:SigmaLSigma} holds for the Dahl, LuGre, and FrBD models, as well as all the formulations considered in \textcite{SemilinearV}. In fact, in some cases, the matrix-valued function $\Sigma \in C^1(\mathbb{R}^2;\mathbf{M}_2(\mathbb{R}))$ is uniformly bounded, which may ensure global well-posedness for the mild solutions of the open-loop ODE-PDE system~\eqref{eq:originalSystems} (see \textcite{SemilinearV}). 
Together, Assumptions~\ref{ass:ODEf} and~\ref{ass:ODEf2} permit recovering some preliminary results, as formalized next in Proposition~\ref{prop2:strictDDD} and Lemma~\ref{prop:pass}.

\begin{proposition}[Strict dissipativity]\label{prop2:strictDDD}
Suppose that Assumptions~\ref{ass:ODEf} and~\ref{ass:ODEf2} hold. Then, for all $y \in \mathbb{R}^{2}$, the unbounded operator $(\mathscr{A}_{\Sigma}(y),\mathscr{D}(\mathscr{A}_{\Sigma}(y)))$, defined by
\begin{subequations}\label{eq:Ay}
\begin{align}
(\mathscr{A}_{\Sigma}(y)\zeta)(\xi) & \triangleq  \Sigma(y)\bigl[\zeta(\xi) + (\mathscr{K}_3\zeta)\bigr] +(\mathscr{A}\zeta)(\xi), \\
\mathscr{D}(\mathscr{A}_{\Sigma}(y)) & = \mathscr{D}(\mathscr{A}) \triangleq \Bigl\{ \zeta \in H^1((0,1);\mathbb{R}^{2}) \mathrel{\Big|} \zeta(0) = 0\Bigr\}, 
\end{align}
\end{subequations}
also generates an exponentially stable $C_0$-semigroup on $L^2((0,1);\mathbb{R}^{2})$. In particular,
\begin{align}\label{eq:dissA0001}
\begin{split}
& \RE\langle \mathscr{A}_{\Sigma}(y)\zeta,\mathscr{Q}\zeta\rangle_{L^2((0,1);\mathbb{R}^{2})}  \leq -\omega \norm{\zeta(\cdot)}_{L^2((0,1);\mathbb{R}^{2})}^2,
\end{split}
\end{align}
for all $(y,\zeta) \in \mathbb{R}^{2} \times \mathscr{D}(\mathscr{A}_{\Sigma}(y))$, with the same $\omega \in \mathbb{R}_{>0}$ as in~\eqref{eq:dissA000}. 
\begin{proof}
The assertion is an immediate consequence of the inequalities~\eqref{eq:dissA000} and~\eqref{eq:ineqP}.
\end{proof}
\end{proposition}

\begin{lemma}\label{prop:pass}
Suppose that Assumptions~\ref{ass:ODEf} and~\ref{ass:ODEf2} hold. Then, for every $\mathbb{R}^{2} \ni y = [y_1\; y_2]^{\mathrm{T}}$, the nonlocal ODE
\begin{subequations}\label{eq:PDEphi}
\begin{align}
\begin{split}
\Lambda \dpd{\hat{z}(\xi,y)}{\xi} & = \Sigma(y)\bigl[\hat{z}(\xi,y) + (\mathscr{K}_3\hat{z})(y)\bigr]  +(\mathscr{K}_4\hat{z})(y)\\
& \quad +h_2(y),\quad \xi \in (0,1) \label{eq:originalSystemsPDEODE} 
\end{split} \\
\hat{z}(0,y) & = 0, \label{eq:originalSystemsBCODE}
\end{align}
\end{subequations}
admits a unique solution $\hat{z}(\cdot,y) \in C^1([0,1];\mathbb{R}^2)$, with $\mathbb{R}^2 \ni \hat{z}(\xi,y) = [\hat{z}_1(\xi,y_1)\; \hat{z}_2(\xi,y_2)]^{\mathrm{T}}$.
\begin{proof}
The result is a consequence of inequality~\eqref{eq:dissA0001}.
\end{proof}
\end{lemma}
For what follows, it is also beneficial to define the function $\mathbb{R}^2 \ni \hat{F}_y(\cdot) \triangleq [\hat{F}_{y1}(\cdot) \; \hat{F}_{y2}(\cdot)]^{\mathrm{T}}$ as
\begin{align}\label{eq:functionPhi}
\begin{split}
 \hat{F}_y(y) & \triangleq (\mathscr{K}_1\hat{z})(y)+\Sigma(y)(\mathscr{K}_2\hat{z})(y)+h_1(y).
\end{split}
\end{align}
Utilizing~\eqref{eq:PDEphi} and~\eqref{eq:functionPhi}, the equilibria $(X^\star,z^\star)\in \mathbb{R}^2\times\mathscr{D}(\mathscr{A})$ of~\eqref{eq:originalSystems} associated with a constant input $U^\star \in \mathbb{R}^2$ are given by
\begin{subequations}\label{eq:equilibrium}
\begin{align}
A_1X^\star + G_1F_y^\star & = 0, \\
F_y^\star - \hat{F}_y(\alpha^\star) & = 0, \\
z^\star(\xi) - \hat{z}(\xi,\alpha^\star) & = 0, \\
\alpha^\star - A_2X^\star - G_2U^\star & = 0,
\end{align}
\end{subequations}
where $\mathbb{R}^2 \ni F_y^\star = [F_{y1}^\star\; F_{y2}^\star]^{\mathrm{T}}$ and $\mathbb{R}^2 \ni \alpha^\star = [\alpha_1^\star\; \alpha_2^\star]^{\mathrm{T}}$ denote respectively the steady-state tire forces and apparent slip angles.

In particular, the following Assumption~\ref{ass:differ} is postulated concerning the functions $\hat{z}(\xi,\cdot)$ and $\hat{F}_y(\cdot)$.
\begin{assumption}\label{ass:differ}
The functions $\hat{z}(\xi,\cdot) \in C^1([0,1]\times\mathbb{R}^2;\mathbb{R}^2)$, $\hat{F}_y\in C^1(\mathbb{R}^2; \mathbb{R}^2)$. Moreover, there exists $M_{\hat{z}}\in \mathbb{R}_{\geq 0}$ such that
\begin{align}
\max_{y \in \mathbb{R}^2}\norm{\dpd{\hat{z}(\cdot,y)}{y}}_\infty \leq M_{\hat{z}}.
\end{align}
\end{assumption}
Assumption~\ref{ass:differ} is always verified in practice. For instance, Figure~\ref{figure:typicalForce} illustrates some typical trends for the function $\hat{F}_y(\cdot)$, obtained for constant and exponentially decreasing pressure distributions $\bar{p}_i(\xi)$, $i \in \{1,2\}$.

Before moving to the next Section~\ref{sect:assOBS}, it is useful to introduce the matrix $\tilde{C}(\alpha^\star) \in \mathbf{M}_2(\mathbb{R})$ of generalized cornering stiffnesses as
\begin{align}\label{eq:matrixCorn}
\begin{split}
\tilde{C}(\alpha^\star) & = \begin{bmatrix} \tilde{C}_1(\alpha_1^\star) & 0 \\ 0 & \tilde{C}_2(\alpha_2^\star) \end{bmatrix} \\
& \triangleq \begin{bmatrix} \eval{\dod{\hat{F}_{y1}(y_1)}{y_1}}_{y_1=\alpha_1^\star} & 0 \\ 0 &  \eval{\dod{\hat{F}_{y2}(y_2)}{y_2}}_{y_2 = \alpha_2^\star} \end{bmatrix} \\
& = \eval{\nabla_{y}\hat{F}_y(y)^{\mathrm{T}}}_{y = \alpha^\star}.
\end{split}
\end{align}
In the vehicle dynamics literature, the quantities $\tilde{C}_1(\alpha_1^\star), \tilde{C}_2(\alpha_2^\star) \in \mathbb{R}$ appearing in~\eqref{eq:matrixCorn} are typically referred to as \emph{generalized cornering stiffnesses} of the front and rear axle, respectively.

\begin{figure}
\centering
\includegraphics[width=0.9\linewidth]{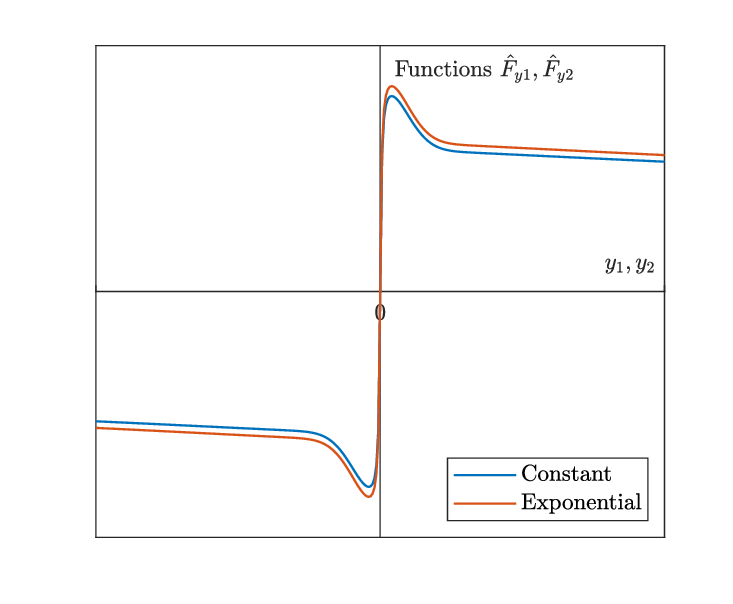} 
\caption{Typical trends of the function $\hat{F}_y(\cdot)$ obtained for constant and exponentially decreasing pressure distributions $\bar{p}_i(\xi)$, $i \in \{1,2\}$.}
\label{figure:typicalForce}
\end{figure}


\subsubsection{Assumptions for controller design}\label{sect:assOBS}
A last set of assumptions is introduced to facilitate the design of a controller that locally stabilizes the ODE-PDE system~\eqref{eq:originalSystems}. In this context, it is necessary to define the matrices $\tilde{A}_1(\alpha^\star),\tilde{G}_1(\alpha^\star) \in \mathbf{M}_2(\mathbb{R})$:
\begin{subequations}
\begin{align}
\begin{split}
& \tilde{A}_1(\alpha^\star) \triangleq A_1 + G_1\tilde{C}(\alpha^\star)A_2 \\
& = -\begin{bmatrix}\dfrac{\tilde{C}_1(\alpha_1^\star) + \tilde{C}_2(\alpha_2^\star)}{mv_x} & \dfrac{l_1\tilde{C}_1(\alpha_1^\star) - l_2\tilde{C}_2(\alpha_2^\star)}{mv_x^2}+1 \\
\dfrac{l_1\tilde{C}_1(\alpha_1^\star) - l_2\tilde{C}_2(\alpha_2^\star)}{I_z} & \dfrac{l_1^2\tilde{C}_1(\alpha_1^\star) + l_2^2\tilde{C}_2(\alpha_2^\star)}{I_zv_x}
\end{bmatrix},
\end{split} \label{eq:A1tilde} \\
\begin{split}
& \tilde{G}_1(\alpha^\star) \triangleq G_1\tilde{C}(\alpha^\star)G_2= \begin{bmatrix}\dfrac{\tilde{C}_1(\alpha_1^\star)}{mv_x} & \chi\dfrac{\tilde{C}_2(\alpha_2^\star)}{mv_x} \\
\dfrac{l_1\tilde{C}_1(\alpha_1^\star)}{I_z} & -\chi\dfrac{l_2\tilde{C}_2(\alpha_2^\star)}{I_z} 
\end{bmatrix}.
\end{split}
\end{align}
\end{subequations}
Accordingly, stabilizability assumptions are formulated below.
\begin{assumption}[Stabilizability]\label{ass:stabilizability}
The pair $(\tilde{A}_1(\alpha^\star), \tilde{G}_1(\alpha^\star))$ is stabilizable.
\end{assumption}

\begin{remark}\label{remark:assDet}
Typically, $\tilde{C}_1(\alpha_1^\star) > 0$ suffices to guarantee the \emph{controllability} of the pair $(\tilde{A}_1(\alpha^\star), \tilde{G}_1(\alpha^\star))$ in Assumption~\ref{ass:stabilizability}. Therefore, excluding maneuvers at the limit of handling where $\tilde{C}_1(\alpha_1^\star) = \tilde{C}_2(\alpha_2^\star)=0$, Assumption~\ref{ass:stabilizability} is always fulfilled in normal operating conditions of the vehicle.
\end{remark}

\section{Stability}\label{sect:stabl}
The present Section investigates the stability of the ODE-PDE system~\eqref{eq:originalSystems}. Specifically, Section~\ref{sect:StateFedd0} adopts a singular perturbation approach, considering the reduced and boundary layer subsystems in isolation, whereas Section~\ref{sect:StateFedde} recovers a Tikhonov-like stability result for sufficiently small values of the parameter $\tau_z$.

\subsection{Analysis via singular\\ perturbation theory}\label{sect:StateFedd0}
By exploiting the timescale separation between the ODE and PDE equations, the stability of the interconnection~\eqref{eq:originalSystems} may be studied by considering the corresponding reduced ODE and boundary layer PDE subsystems, as separately done in Sections~\ref{eq:reducedODEe} and~\ref{eq:BoundaryLayere}, respectively.

\subsubsection{Reduced ODE subsystem}\label{eq:reducedODEe}
In proceeding with a singular perturbation analysis, the first step consists of deriving the reduced ODE subsystem.
To this end, by setting $\tau_z = 0$ in~\eqref{eq:originalSystemsPDE}, the following nonlocal ODE is deduced for fixed $t$: 
\begin{subequations}\label{eq:ODeredudec0}
\begin{align}
\begin{split}
& \Lambda \dpd{z(\xi,t)}{\xi}  =\Sigma\bigl(\alpha(X(t),U(t))\bigr)\bigl[z(\xi,t) + (\mathscr{K}_3z)(t)\bigr] \\
& \qquad\qquad \qquad  +(\mathscr{K}_4z)(t)+h_2\Bigl(\alpha\bigl(X(t),U(t)\bigr)\Bigr), \\
& \qquad \qquad \qquad \qquad \qquad  (\xi,t) \in (0,1) \times (0,T),
\end{split}  \label{eq:originalSystemsPDEEqSF20} \\
& z(0,t) = 0, \quad t \in (0,T).\label{eq:originalSystemsBCEqSF20}
\end{align}
\end{subequations}
By Lemma~\ref{prop:pass}, for every $t \in (0,T)$,~\eqref{eq:ODeredudec0} admits a unique solution $z(\cdot,t) \in C^1([0,1];\mathbb{R}^2)$ satisfying the BC~\eqref{eq:originalSystemsBCEqSF20} in the form
\begin{align}\label{eq:zdeltaQA0}
z(\xi,t) = \hat{z}\Bigl(\xi,\alpha\bigl(X(t),U(t)\bigr)\Bigr), \quad \xi \in [0,1].
\end{align}
Substituting~\eqref{eq:zdeltaQA0} into~\eqref{eq:originalSystemsODE} yields
\begin{align}\label{eq:X_deltaRed0}
\begin{split}
\dot{\bar{X}}(t) & = A_1\bar{X}(t) + G_1\hat{F}_y\Bigl(\alpha\bigl(\bar{X}(t),\bar{U}(t)\bigr)\Bigr), \quad t \in (0,T),
\end{split}
\end{align}
where the bar notation has been introduced to indicate the ODE variables obtained for $\tau_z = 0$.

An equilibrium $(X^\star,U^\star) \in \mathbb{R}^4$ as in~\eqref{eq:equilibrium} is now considered, along with the variables $\mathbb{R}^2 \ni \bar{X}_\delta(t) \triangleq \bar{X}(t)-X^\star$ and $\mathbb{R}^2 \ni \bar{U}_\delta(t) \triangleq \bar{U}(t)-U^\star$. Consequently, performing an exact first-order Taylor's expansion under Assumption~\ref{ass:differ}, and specifying $\bar{U}_\delta(t) = 0$ provides
\begin{align}\label{eq:LienarXdeltraaa0}
\begin{split}
\dot{\bar{X}}_\delta(t) & = \tilde{A}_1(\alpha^\star)\bar{X}_\delta(t) + R\bigl(\bar{X}_\delta(t),\alpha^\star\bigr), \quad t \in (0,T),
\end{split}
\end{align}
where, for every $\alpha^\star \in \mathbb{R}^2$, $R(\cdot,\alpha^\star) \in C^0(\mathbb{R}^2;\mathbb{R}^2)$ satisfies
\begin{align}\label{eq:Lsossd0D0}
\lim_{\norm{Z}_2 \to 0}\dfrac{\norm{R(Z,\alpha^\star)}_2}{\norm{Z}_2} = 0.
\end{align}
Local exponential stability results for the reduced ODE subsystem~\eqref{eq:LienarXdeltraaa0}, which is equivalent to~\eqref{eq:X_deltaRed0}, are stated in Proposition~\ref{prop:localExpODEe} below.
\begin{proposition}\label{prop:localExpODEe}
Under Assumptions~\ref{ass:ODEf}-\ref{ass:differ}, consider the reduced ODE subsystem~\eqref{eq:LienarXdeltraaa0} along with the equilibrium $(X^\star,U^\star)$, and suppose that the matrix $\tilde{A}_1(\alpha^\star)\in \mathbf{M}_2(\mathbb{R})$ in~\eqref{eq:A1tilde} is Hurwitz. Then, there exists $r \in \mathbb{R}_{>0}$ such that, for all ICs $\bar{X}_0 \triangleq \bar{X}(0) \in \mathbb{R}^2$ with $\norm{\bar{X}_0-X^\star}_2 < r$, the unique solution $\bar{X} \in C^0(\mathbb{R}_{\geq 0};\mathbb{R}^2)$ to~\eqref{eq:X_deltaRed0} satisfies
\begin{align}
\norm{\bar{X}(t)-X^\star}_2 \leq \beta_1(\alpha^\star)\eu^{-\sigma_1(\alpha^\star) t}\norm{\bar{X}_0-X^\star}_2, \quad t\in \mathbb{R}_{\geq 0},
\end{align}
for some $\beta_1(\alpha^\star),\sigma_1(\alpha^\star) \in \mathbb{R}_{>0}$.
\end{proposition}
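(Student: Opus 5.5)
This is the standard linearization (Lyapunov indirect method) argument applied to the finite-dimensional reduced system~\eqref{eq:LienarXdeltraaa0}. Well-posedness comes first. Under Assumption~\ref{ass:differ}, $\hat{F}_y \in C^1(\mathbb{R}^2;\mathbb{R}^2)$, and $\alpha$ is affine in $X$ by~\eqref{eq:relVel}. Hence the right-hand side of~\eqref{eq:X_deltaRed0} is $C^1$, and therefore locally Lipschitz, in $\bar{X}$. Local existence and uniqueness of a solution $\bar{X}\in C^0([0,t\ped{max});\mathbb{R}^2)$ then follows from the Picard--Lindelöf theorem. Global existence on $\mathbb{R}_{\geq 0}$ will come out of the a priori bound derived below, since the solution will be shown to stay in a bounded ball.

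Next, since $\tilde{A}_1(\alpha^\star)$ is Hurwitz, I would fix $Q = I_2$ and take the unique solution $P = P^{\mathrm{T}}\succ 0$ of the Lyapunov equation
\begin{align*}
\tilde{A}_1^{\mathrm{T}}(\alpha^\star)P + P\tilde{A}_1(\alpha^\star) = -I_2 ,
\end{align*}
and use $V(Z) \triangleq Z^{\mathrm{T}}PZ$ as the Lyapunov function. This gives the sandwich bounds $\lambda_{\min}(P)\norm{Z}_2^2 \le V(Z) \le \lambda_{\max}(P)\norm{Z}_2^2$. Along solutions of~\eqref{eq:LienarXdeltraaa0},
\begin{align*}
\dot V = -\norm{\bar{X}_\delta}_2^2 + 2\bar{X}_\delta^{\mathrm{T}}PR(\bar{X}_\delta,\alpha^\star).
\end{align*}
By~\eqref{eq:Lsossd0D0}, there is $\rho>0$ such that $\norm{R(Z,\alpha^\star)}_2 \le \norm{Z}_2/(4\lambda_{\max}(P))$ whenever $\norm{Z}_2<\rho$. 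On that ball, $\dot V \le -\tfrac12\norm{\bar{X}_\delta}_2^2 \le -V/(2\lambda_{\max}(P))$.

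The step needing some care is making sure trajectories never leave the ball where this remainder estimate holds. I would handle it with a sublevel-set invariance argument. Choose $c = \lambda_{\min}(P)\rho^2$, so that the set $\{V<c\}$ is contained in the ball of radius $\rho$. Then set
\begin{align*}
r \triangleq \rho\sqrt{\lambda_{\min}(P)/\lambda_{\max}(P)} ,
\end{align*}
so that $\norm{\bar{X}_0-X^\star}_2<r$ implies $V(\bar{X}_\delta(0))<c$. Because $\dot V<0$ on $\{V<c\}\setminus\{0\}$, a contradiction argument at the first exit time shows this set is forward invariant. As a consequence, $t\ped{max}=\infty$.

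Finally, the comparison lemma gives $V(t)\le \eu^{-t/(2\lambda_{\max}(P))}V(0)$. Converting back with the sandwich bounds yields the claimed estimate with
\begin{align*}
\beta_1(\alpha^\star) = \sqrt{\lambda_{\max}(P)/\lambda_{\min}(P)}, \qquad \sigma_1(\alpha^\star) = \frac{1}{4\lambda_{\max}(P)} .
\end{align*}
Both constants depend on $\alpha^\star$ only through $P$. Equivalently, one could simply invoke Lyapunov's indirect method for $C^1$ vector fields, noting that $\tilde{A}_1(\alpha^\star)$ is exactly the Jacobian at $X^\star$ by~\eqref{eq:matrixCorn} and~\eqref{eq:A1tilde}.
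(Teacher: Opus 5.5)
Your proof is correct and is the standard Lyapunov indirect-method argument that the paper relies on. The paper gives no separate proof: it treats the claim as an immediate consequence of rewriting the reduced system as~\eqref{eq:LienarXdeltraaa0} with the remainder property~\eqref{eq:Lsossd0D0}, and your quadratic Lyapunov function, domination of $R$ near the origin, sublevel-set invariance, and explicit constants $\beta_1(\alpha^\star)=\sqrt{\lambda_{\max}(P)/\lambda_{\min}(P)}$, $\sigma_1(\alpha^\star)=1/(4\lambda_{\max}(P))$ mirror the reasoning it uses in the proof of Theorem~\ref{thm:main1}.
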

Based on the assertion of Proposition~\ref{prop:localExpODEe}, some interesting considerations are collected below.
\begin{remark}
For $\tilde{C}_1(\alpha_1^\star),\tilde{C}_2(\alpha_2^\star) >0$, the famous (local) \emph{understeer inequality} $\tilde{C}_1(\alpha_1^\star)l_1 < \tilde{C}_2(\alpha_2^\star)l_2$ encountered in the vehicle dynamics literature \parencite{Guiggiani} is sufficient to ensure that $\tilde{A}_1(\alpha^\star)$ is Hurwitz.
\end{remark} 
\begin{remark}
The reduced ODE subsystem, in both its original and linearized variants~\eqref{eq:X_deltaRed0} and~\eqref{eq:LienarXdeltraaa0}, respectively, coincides with the formulations typically encountered in the literature, obtained by disregarding the fast dynamics of the tires \parencite{Guiggiani}.
\end{remark}

\subsubsection{Boundary layer PDE subsystem}\label{eq:BoundaryLayere}
The second step requires deriving the boundary layer PDE subsystem.
In particular, performing the transformation $\mathbb{R}^2 \ni z_\delta(\xi,t) = [z_{\delta1}(\xi,t)\; z_{\delta2}(\xi,t)]^{\mathrm{T}} \triangleq z(\xi,t)-\hat{z}(\xi,\alpha(X(t),U(t)))$, and introducing the time-like variable $\mathbb{R}_{\geq 0} \ni s \triangleq t/\tau_z$ gives
\begin{subequations}\label{eq:ODeredudec2}
\begin{align}
\begin{split}
& \dpd{z_\delta(\xi,s)}{s}+ \Lambda \dpd{z_\delta(\xi,s)}{\xi} =\Sigma\bigl(\alpha(X(t),U(t))\bigr) \\
& \qquad \times \bigl[z_\delta(\xi,s) + (\mathscr{K}_3z_\delta)(s)\bigr]+(\mathscr{K}_4z_\delta)(s) \\
& \qquad - \tau_z\dpd{\hat{z}\bigl(\xi,\alpha(X(t),U(t))\bigr)}{t}, \quad (\xi,s) \in (0,1) \times (0,S),
\end{split}  \label{eq:originalSystemsPDEEqSF22} \\
& z_\delta(0,s) = 0, \quad s \in (0,S),\label{eq:originalSystemsBCEqSF22}
\end{align}
\end{subequations}
where the identity
\begin{align}
\begin{split}
& \Lambda\dpd{\hat{z}\bigl(\xi,\alpha(X(t),U(t))\bigr)}{\xi}  =\Sigma\bigl(\alpha(X(t),U(t))\bigr) \\
& \qquad \times \Bigl[\hat{z}\bigl(\xi,\alpha(X(t),U(t))\bigr) + (\mathscr{K}_3\hat{z})\bigl(\alpha(X(t),U(t))\bigr)\Bigr] \\
& \qquad  +(\mathscr{K}_4\hat{z})\bigl(\alpha(X(t),U(t))\bigr) \\
& \qquad +h_2\Bigl(\alpha\bigl(X(t),U(t)\bigr)\Bigr), \quad \xi \in [0,1],
\end{split}
\end{align}
has been used.
Therefore, by setting $\tau_z = 0$ in~\eqref{eq:ODeredudec2}, the dynamics of the boundary layer PDE subsystem may be deduced to obey
\begin{subequations}\label{eq:ODeredudec3}
\begin{align}
\begin{split}
& \dpd{\bar{z}_\delta(\xi,s)}{s}+ \Lambda \dpd{\bar{z}_\delta(\xi,s)}{\xi} =\Sigma\bigl(\alpha(X(t),U(t))\bigr) \\
& \qquad \qquad \times \bigl[\bar{z}_\delta(\xi,s) + (\mathscr{K}_3\bar{z}_\delta)(s)\bigr]   +(\mathscr{K}_4\bar{z}_\delta)(s), \\
& \qquad \qquad \qquad \qquad \quad (\xi,s) \in (0,1) \times (0,S),
\end{split}  \label{eq:originalSystemsPDEEqSF23} \\
& \bar{z}_\delta(0,s) = 0, \quad s \in (0,S),\label{eq:originalSystemsBCEqSF23}
\end{align}
\end{subequations}
where the bar notation has again been adopted to indicate the PDE variables corresponding to $\tau_z = 0$. Before enouncing the stability results for~\eqref{eq:ODeredudec3}, it is worth clarifying that, in~\eqref{eq:ODeredudec3}, the variable $t$ should be regarded as a parameter, and consequently $X(t)$ and $U(t)$ as frozen at a certain time.
\begin{lemma}\label{lemma:sdjda}
Suppose that Assumptions~\ref{ass:ODEf} and~\ref{ass:ODEf2} hold. Then, for all ICs $\bar{z}_{\delta,0} \triangleq \bar{z}_\delta(\cdot,0) \in L^2((0,1);\mathbb{R}^2)$ and $(X(t),U(t)) \in \mathbb{R}^4$, the boundary layer PDE subsystem~\eqref{eq:ODeredudec3} admits a unique mild solution 
$\bar{z}_\delta \in C^0(\mathbb{R}_{\geq 0};L^2((0,1);\mathbb{R}^2))$ satisfying
\begin{align}\label{eq:bound1}
\norm{\bar{z}_\delta(\cdot,s)}_{L^2((0,1);\mathbb{R}^2)} \leq \beta_2\eu^{-\sigma_2 s}\norm{\bar{z}_{\delta,0}(\cdot)}_{L^2((0,1);\mathbb{R}^2)}, \quad s \in \mathbb{R}_{\geq 0},
\end{align}
for some $\beta_2,\sigma_2 \in \mathbb{R}_{>0}$.
\begin{proof}
According to~\textcite[Theorem~2.1 and Corollary~2.1]{FEM}, the operator $(\mathscr{A}_\Sigma(y), \mathscr{D}(\mathscr{A}_\Sigma(y)))$ defined in~\eqref{eq:Ay} generates a $C_0$-semigroup on $L^2((0,1);\mathbb{R}^2)$ for every $y \in \mathbb{R}^2$. Hence, setting $y = \alpha(X(t),U(t))$, the existence and uniqueness of mild solutions $\bar{z}_\delta \in C^0(\mathbb{R}_{\geq 0};L^2((0,1);\mathbb{R}^2))$ follow by standard arguments for all $\bar{z}_{\delta,0} \in L^2((0,1);\mathbb{R}^2)$ and $(X(t),U(t)) \in \mathbb{R}^4$. On the other hand, the bound~\eqref{eq:bound1} may be obtained by defining the Lyapunov function $\bar{V}(\bar{z}_\delta(\cdot,s)) \triangleq \frac{1}{2}\langle \bar{z}_\delta(\cdot,s),\mathscr{Q}\bar{z}_\delta(\cdot,s)\rangle_{L^2((0,1);\mathbb{R}^2)}$, which, together with Assumptions~\ref{ass:ODEf} and~\ref{ass:ODEf2}, implies
\begin{align}
\dot{\bar{V}}(s) \leq -\omega\norm{\bar{z}_\delta(\cdot,s)}_{L^2((0,1);\mathbb{R}^2)}^2, \quad s \in \mathbb{R}_{\geq 0}.
\end{align}
Since $\bar{V}(\bar{z}_\delta(\cdot,s))$ is equivalent to a squared norm on $L^2((0,1);\mathbb{R}^2)$,~\eqref{eq:bound1} follows with $\beta_2,\sigma_2 \in \mathbb{R}_{>0}$ uniform in the frozen parameters $(X(t),U(t))$. 
\end{proof}
\end{lemma}

The assertions of Proposition~\ref{prop:localExpODEe} and Lemma~\ref{lemma:sdjda} are formally valid only for $\tau_z = 0$. The next Section~\ref{sect:StateFedde} generalizes the singular perturbation analysis by considering sufficiently small $\tau_z$.

\subsection{Stability analysis\\ for sufficiently small $\tau_z$}\label{sect:StateFedde}
Based on the preliminary results of Section~\ref{sect:StateFedd0}, the local stability of the system~\eqref{eq:originalSystems} is studied for sufficiently small $\tau_z$. To this end, the ODE-PDE interconnection~\eqref{eq:originalSystems} may be recast as
\begin{subequations}\label{eq:LienarXdeltraaaBXCDe}
\begin{align}\label{eq:LienarXdeltraaaBe}
\begin{split}
& \dot{X}_\delta(t) = \tilde{A}_1(\alpha^\star)X_\delta(t) + R\bigl(X_\delta(t),\alpha^\star\bigr) +G_1(\mathscr{K}_1z_\delta)(t) \\
& \qquad \qquad +G_1\Sigma\bigl(\alpha(X(t),U^\star)\bigr)(\mathscr{K}_2z_\delta)(t), \quad t \in (0,T),
\end{split} \\
\begin{split}
& \dpd{z_\delta(\xi,t)}{t}+ \dfrac{\Lambda}{\tau_z} \dpd{z_\delta(\xi,t)}{\xi} =\dfrac{1}{\tau_z}\Sigma\bigl(\alpha(X(t),U^\star)\bigr) \\
& \qquad\qquad \times \bigl[z_\delta(\xi,t) + (\mathscr{K}_3z_\delta)(t)\bigr]+\dfrac{1}{\tau_z}(\mathscr{K}_4z_\delta)(t) \\
& \qquad\qquad- \dpd{\hat{z}\bigl(\xi,\alpha(X(t),U^\star)\bigr)}{\alpha}A_2 \\
& \qquad\qquad \times\bigl[\tilde{A}_1(\alpha^\star)X_\delta(t) + R\bigl(X_\delta(t),\alpha^\star\bigr)\bigr] \\
& \qquad\qquad - \dpd{\hat{z}\bigl(\xi,\alpha(X(t),U^\star)\bigr)}{\alpha}A_2G_1 \\
& \qquad\qquad \times \Bigl[(\mathscr{K}_1z_\delta)(t) +\Sigma\bigl(\alpha(X(t),U^\star)\bigr)(\mathscr{K}_2z_\delta)(t)\Bigr], \\
& \qquad\qquad\qquad \qquad \quad (\xi,t) \in (0,1) \times (0,T),
\end{split}  \label{eq:originalSystemsPDEEqSF22e} \\
& z_\delta(0,t) = 0, \quad t \in (0,T),\label{eq:originalSystemsBCEqSF22e}
\end{align}
\end{subequations}
The following Lyapunov function candidate is considered:
\begin{align}\label{eq:V1e}
V_1\bigl(X_\delta(t)\bigr)\triangleq \dfrac{1}{2}X_\delta^{\mathrm{T}}(t)Q(\alpha^\star)X_\delta(t),
\end{align}
where $\mathbf{Sym}_2(\mathbb{R}) \ni Q(\alpha^\star) \succ 0$ is chosen such that $\tilde{A}_1^{\mathrm{T}}(\alpha^\star)Q(\alpha^\star) + Q(\alpha^\star)\tilde{A}_1(\alpha^\star) = -2qI_2$ for some $q\in \mathbb{R}_{> 0}$.
Differentiating~\eqref{eq:V1e} along the dynamics~\eqref{eq:LienarXdeltraaaBe} yields
\begin{align}\label{eq:V1dot1e}
\begin{split}
\dot{V}_1(t) & \leq - q\norm{X_\delta(t)}_2^2+ X_\delta^{\mathrm{T}}(t)Q(\alpha^\star)G_1 \\
& \quad \times \Bigl[(\mathscr{K}_1z_\delta)(t) +\Sigma\bigl(\alpha(X(t),U^\star)\bigr)(\mathscr{K}_2z_\delta)(t)\Bigr] \\
& \quad + \norm{X_\delta(t)}_2\norm{Q(\alpha^\star)}\norm{R\bigl(X_\delta(t),\alpha^\star\bigr)}_2, \quad t \in (0,T).
\end{split}
\end{align}
By Assumption~\ref{ass:ODEf2}, there exists $b_\Sigma(\alpha^\star) \in \mathbb{R}_{\geq 0}$ such that
\begin{align}\label{eq:boundSigmab}
\begin{split}
\norm{\Sigma\bigl(\alpha(X(t),U^\star)\bigr)} & \leq L_\Sigma\norm{A_2} \norm{X_\delta(t)}_2 + b_\Sigma(\alpha^\star),
\end{split}
\end{align}
which gives
\begin{align}\label{eq:V1dot12e}
\begin{split}
\dot{V}_1(t) & \leq - q\norm{X_\delta(t)}_2^2 + \eta_1(\alpha^\star)\norm{X_\delta(t)}_2\norm{z_\delta(\cdot,t)}_{L^2((0,1);\mathbb{R}^2)}\\
& \quad +\eta_2 \norm{X_\delta(t)}_2\biggl(\norm{X_\delta(t)}_2\norm{z_\delta(\cdot,t)}_{L^2((0,1);\mathbb{R}^2)}  \\
& \quad +  \norm{R\bigl(X_\delta(t),\alpha^\star\bigr)}_2\biggr), \quad t \in (0,T).
\end{split}
\end{align}
with
\begin{subequations}
\begin{align}
\eta_1(\alpha^\star) & \triangleq \norm{Q(\alpha^\star)G_1}\Bigl(\norm{K_1(\cdot)}_\infty +b_\Sigma(\alpha^\star)\norm{K_2(\cdot)}_\infty\Bigr), \\
\eta_2 & \triangleq \max\Bigl\{\norm{Q(\alpha^\star)G_1}L_\Sigma\norm{A_2}\norm{K_2(\cdot)}_\infty, \norm{Q(\alpha^\star)} \Bigr\}.
\end{align}
\end{subequations}
A second Lyapunov function candidate is then defined as
\begin{align}\label{eq:V2e}
V_2\bigl(z_\delta(\cdot,t)\bigr) \triangleq \dfrac{1}{2}\int_0^1 z_\delta^{\mathrm{T}}(\xi,t)\mathscr{Q}(\xi)z_\delta(\xi,t) \dif \xi.
\end{align}
Differentiating~\eqref{eq:V2e} along the dynamics~\eqref{eq:originalSystemsPDEEqSF22e} and imposing the BC~\eqref{eq:originalSystemsBCEqSF22e} yields
\begin{align}\label{eq:V2dot1e}
\begin{split}
\dot{V}_2(t) & \leq -\dfrac{\omega}{\tau_z}\norm{z_\delta(\cdot,t)}_{L^2((0,1);\mathbb{R}^2)}^2 + \eta_3\norm{z_\delta(\cdot,t)}_{L^2((0,1);\mathbb{R}^2)} \\
& \quad \times \biggl(\norm{\tilde{A}_1(\alpha^\star)}\norm{X_\delta(t)}_2+\norm{R\bigl(X_\delta(t),\alpha^\star\bigr)}_2 \\
& \quad + \norm{G_1}\norm{K_2(\cdot)}_\infty\norm{\Sigma\bigl(\alpha(X(t),U^\star)\bigr)} \norm{z_\delta(\cdot,t)}_{L^2((0,1);\mathbb{R}^2)}  \\
& \quad +\norm{G_1K_1(\cdot)}_\infty\norm{z_\delta(\cdot,t)}_{L^2((0,1);\mathbb{R}^2)} \biggr),  \quad t \in (0,T),
\end{split}
\end{align}
with
\begin{align}
\eta_3 \triangleq M_{\hat{z}}\norm{\mathscr{Q}(\cdot)}_\infty\norm{A_2}.
\end{align}
Utilizing again~\eqref{eq:boundSigmab} provides
\begin{align}
\begin{split}
\dot{V}_2(t) & \leq -\dfrac{\omega}{\tau_z}\biggl(1-\tau_z\dfrac{\eta_4(\alpha^\star)}{\omega} \biggr)\norm{z_\delta(\cdot,t)}_{L^2((0,1);\mathbb{R}^2)}^2 \\
& \quad + \eta_5\norm{X_\delta(t)}_2\norm{z_\delta(\cdot,t)}_{L^2((0,1);\mathbb{R}^2)} \\
& \quad +\eta_6\norm{z_\delta(\cdot,t)}_{L^2((0,1);\mathbb{R}^2)} \biggl(\norm{R\bigl(X_\delta(t),\alpha^\star\bigr)}_2 \\
& \quad +\norm{X_\delta(t)}_2 \norm{z_\delta(\cdot,t)}_{L^2((0,1);\mathbb{R}^2)} \biggr), \quad t \in (0,T).
\end{split}
\end{align}
with
\begin{subequations}
\begin{align}
\eta_4(\alpha^\star) & \triangleq \eta_3\Bigl( \norm{G_1K_1(\cdot)}_\infty+ \norm{G_1}\norm{K_2(\cdot)}_\infty b_\Sigma(\alpha^\star)\Bigr), \\
\eta_5(\alpha^\star) & \triangleq \eta_3\norm{\tilde{A}_1(\alpha^\star)},\\
\eta_6 & \triangleq \eta_3\max\Bigl\{1,L_\Sigma\norm{G_1}\norm{K_2(\cdot)}_\infty\norm{A_2}\Bigr\}.
\end{align}
\end{subequations}
The final Lyapunov function is consequently assembled as
\begin{align}\label{eq:VLyapCOmpl}
V\bigl(X_\delta(t),z_\delta(\cdot,t)\bigr) \triangleq V_1\bigl(X_\delta(t)\bigr) + V_2\bigl(z_\delta(\cdot,t)\bigr).
\end{align}
Accordingly, a straightforward application of Cauchy-Schwarz and the generalized Young's inequality for products provides
\begin{align}\label{eq:Vdot1e}
\begin{split}
\dot{V}(t) & \leq - \dfrac{q}{2}\norm{X_\delta(t)}_2^2 \\
& \quad  -\dfrac{\omega}{\tau_z}\biggl(1-\tau_z\dfrac{\eta_7(\alpha^\star)}{\omega}\biggr)\norm{z_\delta(\cdot,t)}_{L^2((0,1);\mathbb{R}^2)}^2 \\
& \quad + \eta_8\biggl( \norm{X_\delta(t)}_2 + \norm{z_\delta(\cdot,t)}_{L^2((0,1);\mathbb{R}^2)}\biggr)\\
& \quad \times \biggl(\norm{R\bigl(X_\delta(t),\alpha^\star\bigr)}_2  \\
& \quad +\norm{X_\delta(t)}_2 \norm{z_\delta(\cdot,t)}_{L^2((0,1);\mathbb{R}^2)} \biggr),  \quad t \in (0,T),
\end{split}
\end{align}
where
\begin{subequations}
\begin{align}
\eta_7(\alpha^\star) & \triangleq\eta_4(\alpha^\star) + \dfrac{\bigl(\eta_1(\alpha^\star)+\eta_5(\alpha^\star)\bigr)^2}{2q}, \\
\eta_8& \triangleq \max\{\eta_2,\eta_6\}. 
\end{align}
\end{subequations}
Therefore, it may be inferred that there exists $\tau_z^*(\alpha^\star) \in \mathbb{R}_{>0}$ such that, for every $\tau_z \in (0,\tau_z^*(\alpha^\star))$,
\begin{align}\label{eq:Vdot3e}
\begin{split}
\dot{V}(t) & \leq -\gamma_0(\alpha^\star)V(t) \\
& \quad + \eta_9(\alpha^\star)V^{\frac{1}{2}}(t)\biggl(V(t) + \norm{R\bigl(X_\delta(t),\alpha^\star\bigr)}_2\biggr), \\ & \qquad \qquad \qquad \qquad \qquad \qquad \qquad \quad t \in (0,T),
\end{split}
\end{align}
for some $\gamma_0(\alpha^\star), \eta_9(\alpha^\star) \in \mathbb{R}_{>0}$. Moreover,~\eqref{eq:Lsossd0D0} implies that, for every $\gamma_1\in \mathbb{R}_{>0}$, there exists $r_\tau^*(\alpha^\star)$ such that $\norm{R(X_\delta(t),\alpha^\star)}_2 \leq \gamma_1V^{\frac{1}{2}}(t)$ for all $V(t) < r_\tau^*(\alpha^\star)$. Consider now $r_\tau\in (0,r_\tau^*(\alpha^\star))$. For $V(t) < r_\tau$, 
\begin{align}\label{eq:Vdot4}
\begin{split}
\dot{V}(t) & \leq -\gamma_0(\alpha^\star)V(t)  + \eta_9(\alpha^\star)\bigl(\sqrt{r_\tau} +\gamma_1\bigr)V(t), \quad t \in (0,T).
\end{split}
\end{align}
Hence, choosing $r_\tau$ such that $\eta_9(\alpha^\star)(\sqrt{r_\tau} +\gamma_1)< \gamma_0(\alpha^\star)$ guarantees the existence of $\gamma(\alpha^\star) \in \mathbb{R}_{>0}$ such that 
\begin{align}\label{eq:Vfinal}
\dot{V}(t) \leq - \gamma(\alpha^\star)V(t), \quad t \in (0,T),
\end{align}
for all $V(t) < r_\tau$. 
The next Theorem~\ref{thm:main1} asserts the main result of the paper.
\begin{theorem}\label{thm:main1}
Under Assumptions~\ref{ass:ODEf}-\ref{ass:differ}, consider the ODE-PDE interconnection~\eqref{eq:originalSystems} along with the equilibrium $(X^\star,U^\star)$, and the input $U(t) = U^\star$, and suppose that the matrix $\tilde{A}_1(\alpha^\star) \in \mathbf{M}_2(\mathbb{R})$ in~\eqref{eq:A1tilde} is Hurwitz. Then, there exist $\tau_z^*(\alpha^\star), r_\tau \in \mathbb{R}_{>0}$ such that, for all $\tau_z \in (0,\tau_z^*(\alpha^\star))$ and ICs $(X_0,z_0(\cdot)) \in \altmathcal{X}$ verifying $V(0) < r_\tau$, with $V(X_\delta(t),z_\delta(\cdot,t))$ as in~\eqref{eq:VLyapCOmpl}, the ODE-PDE system~\eqref{eq:originalSystems} admits a unique mild solution $(X,z) \in C^0(\mathbb{R}_{\geq 0};\altmathcal{X})$ satisfying
\begin{align}\label{eq:boundFinalV0}
\begin{split}
& \norm{(X(t)-X^\star, z_\delta(\cdot,t))}_{\altmathcal{X}} \\
& \quad \leq \beta(\alpha^\star)\eu^{-\sigma(\alpha^\star) t}\norm{(X_0-X^\star, z_{\delta,0}(\cdot))}_{\altmathcal{X}}, \quad t \in \mathbb{R}_{\geq 0},
\end{split}
\end{align}
for some $\beta(\alpha^\star),\sigma(\alpha^\star) \in \mathbb{R}_{>0}$.
\begin{proof}
It follows from standard semigroup arguments for semilinear problems (see, e.g., \textcite[Theorem~11.1.5]{Zwart} or \textcite[Theorem~6.1.4]{Pazy}) that the closed-loop ODE-PDE interconnection~\eqref{eq:LienarXdeltraaaBXCDe} admits a unique local mild solution $(X_\delta,z_\delta) \in C^0([0,t\ped{max});\altmathcal{X})$ for all ICs $(X_{\delta,0},z_{\delta,0})\triangleq (X_\delta(0),z_\delta(\cdot,0)) \in \altmathcal{X}$. Consequently, from the transformations $X(t) \triangleq X_\delta(t) +X^\star$ and $z(\xi,t) \triangleq z_\delta(\xi,t) + \hat{z}(\xi,\alpha(X(t),U(t)))$, with $U(t) = U^\star$ and $\hat{z}(\cdot,\alpha(X(t),U(t))) \in \mathscr{D}(\mathscr{A})$, it may be concluded the original ODE-PDE interconnection~\eqref{eq:originalSystems} also admits a unique mild solution $(X,z) \in C^0([0,t\ped{max});\altmathcal{X})$ for all ICs $(X_0,z_0)\in \altmathcal{X}$.
Moreover, according to~\textcite[Theorem~11.1.5]{Zwart}, to prove global well-posedness, it is sufficient to show that $\norm{(X(t),z(\cdot,t))}_{\altmathcal{X}} < \infty$ for all $t \in \mathbb{R}_{\geq 0}$, which is implied by inequality~\eqref{eq:Vfinal} under Assumption~\ref{ass:differ}. In particular, for sufficiently regular solutions and for $V(0) < r_\tau$,~\eqref{eq:Vfinal} gives $V(t) \leq \eu^{-\gamma(\alpha^\star)}V(0) < r_\tau$ for all $t \in \mathbb{R}_{\geq 0}$; in conjunction with the fact that the Lyapunov function $V(X_\delta(t),z_\delta(\cdot,t))$ is equivalent to the squared norm $\norm{(X_\delta(t),z_\delta(\cdot,t))}_{\altmathcal{X}}^2$ on $\altmathcal{X}$, this yields
\begin{align}\label{eq:XNormksks}
\begin{split}
& \norm{(X_\delta(t),z_\delta(\cdot,t))}_{\altmathcal{X}} \\
& \quad \leq \beta(\alpha^\star)\eu^{-\sigma(\alpha^\star) t}\norm{(X_{\delta,0},z_{\delta,0}(\cdot))}_{\altmathcal{X}}, \quad t \in \mathbb{R}_{\geq 0},
\end{split}
\end{align}
which proves~\eqref{eq:boundFinalV0}.
The result may then be extended to mild solutions using standard density arguments.
\end{proof}
\end{theorem}
Before moving to the design of a stabilizing controller, some conclusive remarks are collected below. 
\begin{remark}\label{remark:tveh}
Theorem~\ref{thm:main1} states that, for sufficiently small $\tau_z$ (and thus high $v_x$), the stability of the semilinear single-track models may be studied using standard finite-dimensional approaches.
In this context, the expression “sufficiently small $\tau_z$” -- and thus the degree of time scale separation -- may be interpreted relative to the (local) characteristic time scale $\tau_X(\alpha^\star) \in \mathbb{R}_{>0}$ of the reduced vehicle dynamics governed by $\tilde A_1(\alpha^\star)$, calculated as
\begin{align}\label{eq:tau_xOpen}
\tau_X(\alpha^\star) = \frac{1}{\abs{\max_{i \in \{1,2\}}\RE \lambda_i(\tilde{A}_1(\alpha^\star))}},
\end{align} 
where $\lambda_1(\tilde{A}_1(\alpha^\star))$ and $\lambda_2(\tilde{A}_1(\alpha^\star))$ denote the eigenvalues of $\tilde{A}_1(\alpha^\star)$. Equation~\eqref{eq:tau_xOpen} is valid provided that the spectral abscissa of $\tilde{A}_1(\alpha^\star)$ is nonzero. When $\tilde{A}_1(\alpha^\star)$ is Hurwitz, $\tau_X(\alpha^\star)$ represents the dominant decay time; when it is unstable, it represents the dominant growth time. The dimensionless ratio $\tau_z/\tau_X(\alpha^\star)$ may thus be regarded as a local indicator of the degree of time-scale separation, but not as a sufficient stability criterion for the full ODE-PDE interconnection.
\end{remark}

A counterexample to the applicability of classical finite-dimensional tools is provided graphically in Figure~\ref{figureForcePostdocrg}, where a local stability chart is illustrated for a semilinear single-track vehicle model~\eqref{eq:originalSystems} with distributed tires, using typical parameter values \parencite{SemilinearV}. The chart, produced using spectral methods, reveals the existence of unstable islands in understeer vehicles ($\tilde{C}_1(0)l_1 < \tilde{C}_2(0)l_2$) for sufficiently low values of the longitudinal speed $v_x$. As discussed in \textcite{Takacs5,BicyclePDE,SemilinearV}, the unstable regions in Figure~\ref{figureForcePostdocrg} are associated with micro-shimmy oscillations and are not predicted by the reduced order model~\eqref{eq:X_deltaRed0}.
\begin{figure*}
\centering
\includegraphics[width=0.7\linewidth]{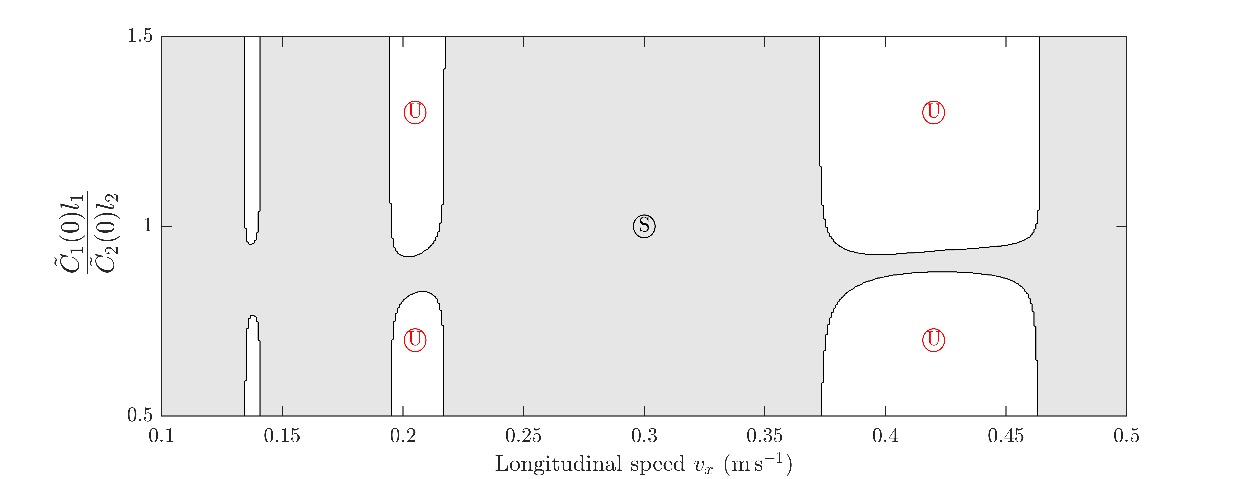} 
\caption{Local stability chart for a semilinear single-track vehicle model~\eqref{eq:originalSystems} with constant pressure distribution and flexible tire carcass (Parametrization~\ref{param:2}), linearized around the zero equilibrium $(X^\star, z^\star(\xi), U^\star) = (0,0,0)$, for different values of the understeer index $\frac{\tilde{C}_{1}(0)l_1}{\tilde{C}_{2}(0)l_2}$ and longitudinal speed $v_x$. The unstable regions (in white) correspond to combinations of parameters associated with micro-shimmy oscillations that are not detected by reduced order representations. Model parameters as in \textcite{SemilinearV}.}
\label{figureForcePostdocrg}
\end{figure*}


\section{Stabilization}\label{sect:Controllllll}
Single-track vehicle models -- especially \emph{oversteer} ones -- might be unstable.
Therefore, the present section is dedicated to the synthesis of a controller that (locally) exponentially stabilizes~\eqref{eq:originalSystems} around a desired equilibrium $X^\star \in \mathbb{R}^2$ as in~\eqref{eq:equilibrium}, corresponding to a stationary input $U(t) = U^\star$.

\subsection{Analysis via singular perturbation\\ theory}\label{sect:StateFedd}
Repeating analogous calculations as in Section~\ref{eq:reducedODEe} and specifying the input as $\bar{U}(t) = U^\star + \bar{U}_\delta(t)$, with
\begin{align}\label{eq:Udeltasss}
\bar{U}_\delta(t) = F(\alpha^\star)\bar{X}_\delta(t),
\end{align}
where $F(\alpha^\star)\in \mathbf{M}_2(\mathbb{R})$ is a matrix to be appropriately selected, produces, after an exact first-order Taylor's expansion,
\begin{align}\label{eq:LienarXdeltraaa}
\begin{split}
\dot{\bar{X}}_\delta(t) & = \tilde{A}_1^*(\alpha^\star)\bar{X}_\delta(t) + R^*\bigl(\bar{X}_\delta(t),\alpha^\star\bigr), \quad t \in (0,T),
\end{split}
\end{align}
where the matrix $\tilde{A}_1^*(\alpha^\star) \in \mathbf{M}_2(\mathbb{R})$ is defined as
\begin{align}\label{eq:A1TildeStar}
\tilde{A}_1^*(\alpha^\star) & \triangleq \tilde{A}_1(\alpha^\star) + \tilde{G}_1(\alpha^\star)F(\alpha^\star),
\end{align}
and, for every $\alpha^\star \in \mathbb{R}^2$, $R^*(\cdot,\alpha^\star) \in C^0(\mathbb{R}^2;\mathbb{R}^2)$ satisfies
\begin{align}\label{eq:Lsossd0D}
\lim_{\norm{Z}_2 \to 0}\dfrac{\norm{R^*(Z,\alpha^\star)}_2}{\norm{Z}_2} = 0.
\end{align}
Local exponential stability results for the closed-loop reduced ODE subsystem~\eqref{eq:LienarXdeltraaa} are asserted by Proposition~\ref{prop:localExpODE} below.
\begin{proposition}\label{prop:localExpODE}
Under Assumptions~\ref{ass:ODEf}-~\ref{ass:stabilizability}, consider the reduced ODE subsystem~\eqref{eq:LienarXdeltraaa} along with the equilibrium $(X^\star,U^\star)$, and the control law $\bar{U}(t) = U^\star+\bar{U}_\delta(t)$, with $\bar{U}_\delta(t)$ as in~\eqref{eq:Udeltasss} and the gain $F(\alpha^\star) \in \mathbf{M}_2(\mathbb{R})$ designed such that the matrix $\tilde{A}_1^*(\alpha^\star) \in \mathbf{M}_2(\mathbb{R})$ in~\eqref{eq:A1TildeStar} is Hurwitz. Then, there exists $r^* \in \mathbb{R}_{>0}$ such that, for all ICs $\bar{X}_0 \triangleq \bar{X}(0) \in \mathbb{R}^2$ with $\norm{\bar{X}_0-X^\star}_2 < r^*$, the unique solution $\bar{X} \in C^0(\mathbb{R}_{\geq 0};\mathbb{R}^2)$ to~\eqref{eq:LienarXdeltraaa} satisfies
\begin{align}
\norm{\bar{X}(t)-X^\star}_2 \leq \beta_1^*(\alpha^\star)\eu^{-\sigma_1^*(\alpha^\star) t}\norm{\bar{X}_0-X^\star}_2, \quad t\in \mathbb{R}_{\geq 0},
\end{align}
for some $\beta_1^*(\alpha^\star),\sigma_1^*(\alpha^\star) \in \mathbb{R}_{>0}$.
\end{proposition}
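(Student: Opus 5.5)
This is the standard linearization (Lyapunov indirect method) argument, mirroring Proposition~\ref{prop:localExpODEe}, now applied to the closed-loop matrix $\tilde{A}_1^*(\alpha^\star)$.

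\emph{Step 1: the closed-loop reduced model.} Substituting $\bar U(t)=U^\star+F(\alpha^\star)\bar X_\delta(t)$ into~\eqref{eq:X_deltaRed0} gives
\begin{align*}
\alpha(\bar X(t),\bar U(t))=\alpha^\star+\bigl(A_2+G_2F(\alpha^\star)\bigr)\bar X_\delta(t).
\end{align*}
Since $\hat F_y\in C^1$ by Assumption~\ref{ass:differ}, the exact first-order expansion is
\begin{align*}
\hat F_y(\alpha^\star+w)=\hat F_y(\alpha^\star)+\tilde C(\alpha^\star)w+o(\norm{w}_2).
\end{align*}
Combining this with the equilibrium relation $A_1X^\star+G_1\hat F_y(\alpha^\star)=0$ from~\eqref{eq:equilibrium} yields~\eqref{eq:LienarXdeltraaa}. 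The remainder is
\begin{align*}
R^*(Z,\alpha^\star)=G_1\Bigl[\hat F_y\bigl(\alpha^\star+(A_2+G_2F)Z\bigr)-\hat F_y(\alpha^\star)-\tilde C(\alpha^\star)(A_2+G_2F)Z\Bigr].
\end{align*}
It is continuous and satisfies~\eqref{eq:Lsossd0D}, because $\norm{(A_2+G_2F)Z}_2\le\norm{A_2+G_2F}\norm{Z}_2$. The right-hand side is continuous, and locally Lipschitz if $\hat F_y$ is $C^1$. Hence local existence and uniqueness of $\bar X$ follow from Picard--Lindelöf.

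\emph{Step 2: Lyapunov function.} Because $\tilde A_1^*(\alpha^\star)$ is Hurwitz (such an $F$ exists by Assumption~\ref{ass:stabilizability}), there is $Q^*(\alpha^\star)\succ0$ solving
\begin{align*}
\tilde A_1^{*\mathrm T}Q^*+Q^*\tilde A_1^*=-2q^*I_2, \qquad q^*>0.
\end{align*}
Take $V^*(Z)=\tfrac12 Z^{\mathrm T}Q^*Z$. Along solutions,
\begin{align*}
\dot V^*\le -q^*\norm{\bar X_\delta}_2^2+\norm{Q^*}\norm{\bar X_\delta}_2\norm{R^*(\bar X_\delta,\alpha^\star)}_2.
\end{align*}
Pick $\rho>0$ such that $\norm{R^*(Z,\alpha^\star)}_2\le \frac{q^*}{2\norm{Q^*}}\norm{Z}_2$ for $\norm{Z}_2<\rho$. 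On that ball, $\dot V^*\le-\frac{q^*}{2}\norm{\bar X_\delta}_2^2\le-\frac{q^*}{\lambda_{\max}(Q^*)}V^*$.

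\emph{Step 3: invariance and the exponential estimate.} Since $\tfrac12\lambda_{\min}\norm{Z}_2^2\le V^*\le\tfrac12\lambda_{\max}\norm{Z}_2^2$, the sublevel set $\{V^*<\tfrac12\lambda_{\min}(Q^*)\rho^2\}$ lies in the ball and is forward invariant. Choose
\begin{align*}
r^*=\rho\sqrt{\lambda_{\min}(Q^*)/\lambda_{\max}(Q^*)}.
\end{align*}
Then $\norm{\bar X_0-X^\star}_2<r^*$ implies the trajectory stays in the ball. The solution therefore stays bounded, exists on $\mathbb{R}_{\ge0}$, and satisfies the claimed bound with
\begin{align*}
\beta_1^*=\sqrt{\lambda_{\max}/\lambda_{\min}}, \qquad \sigma_1^*=\frac{q^*}{2\lambda_{\max}(Q^*)}.
\end{align*}

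The only delicate point is Step~1. Assumption~\ref{ass:differ} gives $C^1$ regularity only, so the remainder is $o(\norm{Z}_2)$ rather than quadratic. That is nonetheless sufficient for the choice of $\rho$ in Step~2, and it also provides the local Lipschitz property needed for uniqueness.
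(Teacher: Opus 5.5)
Your proof is correct and takes essentially the same route as the paper. The paper states this proposition without proof and relies implicitly on the exact first-order Taylor expansion, which gives $\tilde{A}_1^*(\alpha^\star)$ plus the $o(\norm{Z}_2)$ remainder $R^*$, together with Lyapunov's indirect method via a quadratic function solving $\tilde{A}^{\mathrm{T}}Q+Q\tilde{A}=-2qI_2$ (the same construction used for $V_1$ in Section~\ref{sect:StateFedde}); your argument supplies exactly these details, with valid explicit constants $r^*$, $\beta_1^*$, $\sigma_1^*$.
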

The boundary layer PDE subsystem is still given by~\eqref{eq:ODeredudec3}.

\begin{remark}
Since the pair $(\tilde{A}_1(\alpha^\star),\tilde{G}_1(\alpha^\star))$ is stabilizable by Assumption~\ref{ass:stabilizability}, the gain matrix $F(\alpha^\star)$ in~\eqref{eq:Udeltasss} may be selected using any standard finite-dimensional design technique, e.g., pole placement. In this way, the local exponential decay rate of the reduced subsystem~\eqref{eq:LienarXdeltraaa}, $\sigma_1^*(\alpha^\star) \in \mathbb{R}_{>0}$, can be increased by shifting the eigenvalues of $\tilde{A}_1^*(\alpha^\star)$ further into the open left half-plane. In particular, for all-wheel steering vehicles, the matrix $\tilde{G}_1(\alpha^\star)$ is typically invertible, so that $F(\alpha^\star)$ may be chosen as $F(\alpha^\star) = -\tilde{G}_1^{-1}(\alpha^\star)[\rho I_2 + \tilde{A}_1(\alpha^\star)]$ for some $\rho \in \mathbb{R}_{>0}$. In this context, it is worth emphasizing that the rate $\sigma_1^*(\alpha^\star)$ is assignable only for the reduced ODE subsystem~\eqref{eq:LienarXdeltraaa}, and not for the whole interconnection. In particular, assigning a larger decay rate to the reduced ODE subsystem does not necessarily imply a faster or better-damped response of the full ODE-PDE interconnection.
\end{remark}

\subsection{Stability analysis\\ for sufficiently small $\tau_z$}\label{sect:SFepsilon}
Inspired by the analysis conducted in the previous Section~\ref{sect:StateFedd}, the control law is specified as $U(t) = U^\star + U_\delta(t)$, with $U_\delta(t) \in \mathbb{R}^2$ given by
\begin{align}\label{eq:controlLwX}
U_\delta(t) = F(\alpha^\star)X_\delta(t).
\end{align}
With the above control law~\eqref{eq:controlLwX}, the ODE-PDE interconnection~\eqref{eq:originalSystems} may be recast as
\begin{subequations}\label{eq:LienarXdeltraaaBXCD}
\begin{align}\label{eq:LienarXdeltraaaB}
\begin{split}
& \dot{X}_\delta(t) = \tilde{A}_1^*(\alpha^\star)X_\delta(t) + R^*\bigl(X_\delta(t),\alpha^\star\bigr) +G_1(\mathscr{K}_1z_\delta)(t) \\
& \qquad \qquad +G_1\Sigma\bigl(\alpha(X(t),U(t))\bigr)(\mathscr{K}_2z_\delta)(t), \quad t \in (0,T),
\end{split} \\
\begin{split}
& \dpd{z_\delta(\xi,t)}{t}+ \dfrac{\Lambda}{\tau_z} \dpd{z_\delta(\xi,t)}{\xi} =\dfrac{1}{\tau_z}\Sigma\bigl(\alpha(X(t),U(t))\bigr) \\
& \qquad \times \bigl[z_\delta(\xi,t) + (\mathscr{K}_3z_\delta)(t)\bigr]+\dfrac{1}{\tau_z}(\mathscr{K}_4z_\delta)(t) \\
& \qquad - \dpd{\hat{z}\bigl(\xi,\alpha(X(t),U(t))\bigr)}{\alpha}\bigl(A_2 + G_2F(\alpha^\star)\bigr) \\
& \qquad \times \bigl[\tilde{A}_1^*(\alpha^\star)X_\delta(t) + R^*\bigl(X_\delta(t),\alpha^\star\bigr) +G_1(\mathscr{K}_1z_\delta)(t)\bigr] \\
& \qquad - \dpd{\hat{z}\bigl(\xi,\alpha(X(t),U(t))\bigr)}{\alpha}\bigl(A_2 + G_2F(\alpha^\star)\bigr) \\
& \qquad \times G_1\Sigma\bigl(\alpha(X(t),U(t))\bigr)(\mathscr{K}_2z_\delta)(t), \\
& \qquad \qquad \quad (\xi,t) \in (0,1) \times (0,T),
\end{split}  \label{eq:originalSystemsPDEEqSF22} \\
& z_\delta(0,t) = 0, \quad t \in (0,T),\label{eq:originalSystemsBCEqSF22}
\end{align}
\end{subequations}
Theorem~\ref{thm:main2} below represents the stabilization counterpart of Theorem~\ref{thm:main1}.
\begin{theorem}\label{thm:main2}
Under Assumptions~\ref{ass:ODEf}-\ref{ass:stabilizability}, consider the ODE-PDE interconnection~\eqref{eq:originalSystems} along with the equilibrium $(X^\star,U^\star)$, and the control law $U(t) = U^\star+U_\delta(t)$, with $U_\delta(t)$ as in~\eqref{eq:controlLwX} and the gain $F(\alpha^\star) \in \mathbf{M}_2(\mathbb{R})$ designed such that the matrix $\tilde{A}_1^*(\alpha^\star) \in \mathbf{M}_2(\mathbb{R})$ in~\eqref{eq:A1TildeStar} is Hurwitz. Then, there exist $\tau_z^{**}(\alpha^\star), r_\tau^*(\alpha^\star) \in \mathbb{R}_{>0}$ such that, for all $\tau_z \in (0,\tau_z^{**}(\alpha^\star))$ and ICs $(X_0,z_0(\cdot)) \in \altmathcal{X}$ verifying $\norm{(X_0-X^\star,z_{\delta,0}(\cdot))}_{\altmathcal{X}} < r_\tau^*(\alpha^\star)$, the ODE-PDE system~\eqref{eq:originalSystems} admits a unique mild solution $(X,z) \in C^0(\mathbb{R}_{\geq 0};\altmathcal{X})$ satisfying
\begin{align}\label{eq:boundFinalV}
\begin{split}
& \norm{(X(t)-X^\star, z_\delta(\cdot,t))}_{\altmathcal{X}} \\
& \quad \leq \beta^*(\alpha^\star)\eu^{-\sigma^*(\alpha^\star) t}\norm{(X_0-X^\star, z_{\delta,0}(\cdot))}_{\altmathcal{X}}, \quad t \in \mathbb{R}_{\geq 0},
\end{split}
\end{align}
for some $\beta^*(\alpha^\star),\sigma^*(\alpha^\star) \in \mathbb{R}_{>0}$.
\begin{proof}
The proof is similar to that of Theorem~\ref{thm:main1}, and thus omitted.
\end{proof}
\end{theorem}

\begin{remark}\label{remark:tveh2}
Similar to the open-loop case, the degree of time-scale separation may be studied by considering the ratio $\tau_z/\tau_X^*(\alpha^\star)$, with the time constant $\tau_X^*(\alpha^\star) \in \mathbb{R}_{>0}$ calculated as
\begin{align}\label{eq:tau_xClosed}
\tau_X^*(\alpha^\star) = \frac{1}{\abs{\max_{i \in \{1,2\}}\RE \lambda_i(\tilde{A}_1^*(\alpha^\star))}}.
\end{align} 
\end{remark}

\section{Simulation results}\label{sect:numerical}
The numerical values for the model parameters of the example discussed below are listed in Table~\ref{tab:parameters2}. With the given combination of parameters, Assumptions~\ref{ass:ODEf}-\ref{ass:stabilizability} are all fulfilled. Moreover, the considered vehicle is oversteer, and hence inherently unstable for values of the longitudinal speed beyond a critical value. 

In this context, the following numerical results refer to simulations conducted in MATLAB/Simulink\textsuperscript{\textregistered} environment. The semilinear PDE subsystem was solved numerically using a finite difference scheme with a discretization step of $0.02$, and then integrated with \texttt{ode4} with a fixed time step of $10^{-6}$ s. The ICs for the actual system were set to $X_0 = [0.03\; -0.25\, \textnormal{rad}\,\textnormal{s}^{-1}]^{\mathrm{T}}$, and $z_0(\xi) = [0.027\; 0.033]^{\mathrm{T}}$ (corresponding to $\norm{z_0(\cdot)}_{L^2((0,1);\mathbb{R}^2)} = 0.043$), and an input delay of $\delta_U = 0.02$ s was introduced. 
\begin{table}[ht]
\centering
\resizebox{\columnwidth}{!}{%
\begin{tabularx}{\columnwidth}{|c|X|c|c|}
\hline
\textbf{Parameter} & \textbf{Description} & \textbf{Unit} & \textbf{Value} \\
\hline 
$v_x$ & Longitudinal speed & $\textnormal{m}\,\textnormal{s}^{-1}$ & $50$ \\ 
$m$ & Vehicle mass & kg & 1300 \\ 
$I_z$ & Vertical moment of inertia  & $\textnormal{kg}\,\textnormal{m}^{2}$ & 2000 \\
$l_1$ & Front axle length & m & 1.4  \\
$l_2$ & Rear axle length & m & 1 \\
$F_{z1}$ & Front vertical force & N & $2.66 \cdot 10^3$ \\
$F_{z2}$ & Rear vertical force & N & $3.72 \cdot 10^3$ \\
$L_1$ & Front contact patch length & m & 0.11 \\
$L_2$ & Rear contact patch length & m & 0.09 \\
$\sigma_{0,1}$ & Front micro-stiffness & $\textnormal{m}^{-1}$ & 240 \\
$\sigma_{0,2}$ & Rear micro-stiffness & $\textnormal{m}^{-1}$ & 269 \\
$\phi_{1}$ & Front structural parameter & - & 0.92 \\
$\phi_{2}$ & Rear structural parameter & - & 0.92 \\
$\bar{\mu}_1(\cdot)$ & Front friction coefficient & -& 1\\
$\bar{\mu}_2(\cdot)$ & Rear friction coefficient & -& 1\\
$\bar{g}_1(\cdot)$ & Front friction function & -& 1\\
$\bar{g}_2(\cdot)$ & Rear friction function & -& 1\\
$\chi$ & Rear steering actuation & -& 0\\
$\varepsilon$ & Regularization parameter & - & 0 \\
\hline
\end{tabularx}
}
\caption{Model parameters}
\label{tab:parameters2}
\end{table}

Figure~\ref{figure:StatesOpen} illustrates the unstable behavior of the uncontrolled vehicle traveling at $v_x = 50$ $\textnormal{m}\,\textnormal{s}^{-1}$.
\begin{figure}
\centering
\includegraphics[width=1\linewidth]{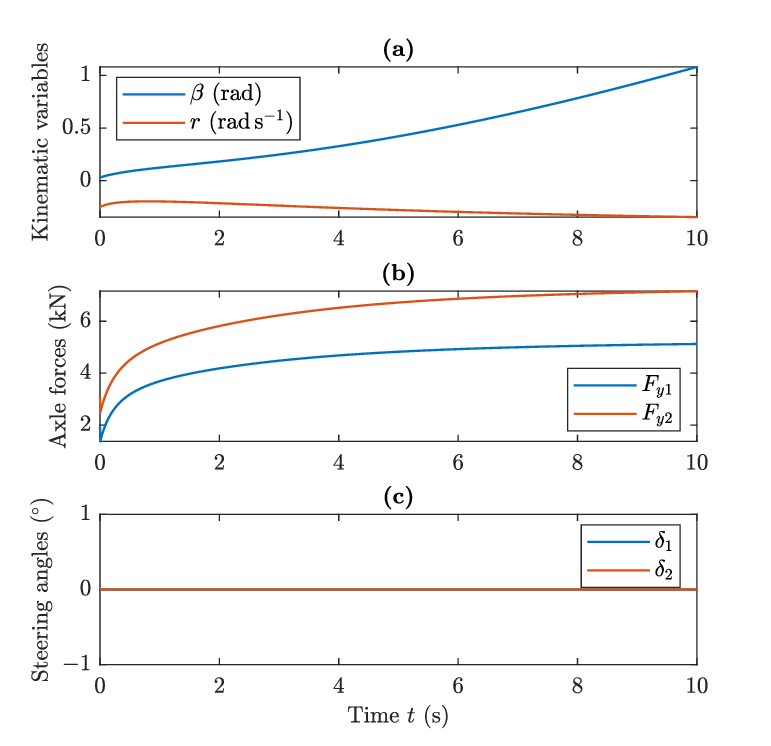} 
\caption{Open loop behavior of the lumped states and steering inputs: \textbf{(a)} kinematic variables; \textbf{(b)} axle forces; \textbf{(c)} steering inputs.}
\label{figure:StatesOpen}
\end{figure}
With the parameter values in Table~\ref{tab:parameters2}, the characteristic time scale of the reduced subsystem, computed as in~\eqref{eq:tau_xOpen}, is $\tau_X(0) = 2.76$ s at $\alpha^\star = 0$, whereas $\tau_z\approx 0.002$ s at $v_x=50$ $\text{m}\,\text{s}^{-1}$, which yields $\tau_z/\tau_X(0)\approx 7.24\times10^{-4}$, indicating a clear separation of time scales. Therefore, the vehicle was stabilized around $\alpha^\star = 0$, by setting 
\begin{align}\label{Fnum}
F(0) = \begin{bmatrix} 2.038 & -0.458\, \textnormal{s} \\ 0 & 0\end{bmatrix},
\end{align} 
which places the poles of $\tilde{A}_1^*(0)$ at $ -10$ and $ -15$ $\textnormal{s}^{-1}$, yielding $\tau_z/\tau_X^*(0) \approx 0.02$ and thus preserving the time scale separation in closed loop (note that the matrix $F(0)$ in~\eqref{Fnum} contains a zero row, reflecting the absence of rear steering). 
The closed-loop trends of the kinematic variables, axle forces, and steering inputs are depicted in Figure~\ref{figure:StatesClosed}\textbf{(a)}, \textbf{(b)}, and \textbf{(c)}, where a rapid convergence around zero may be observed for all the involved quantities. In particular, the steering angle $\delta_1(t)$ initially exhibits a pronounced transient, but never exceeds $4^\circ$ in absolute value, confirming \emph{a posteriori} the feasibility of the maneuver. 

Similar considerations may be drawn by inspecting Figure~\ref{figure:PDEs}, where closed-loop dynamics of the PDE states $z_\delta(\xi,t)$ are illustrated. In particular, under the proposed control action, the distributed states quickly approach zero, which is consistent with the observations reported above. Additional simulations were conducted by slightly varying the controller gain, without appreciable differences in the qualitative behavior of the closed-loop system. 
\begin{figure}
\centering
\includegraphics[width=1\linewidth]{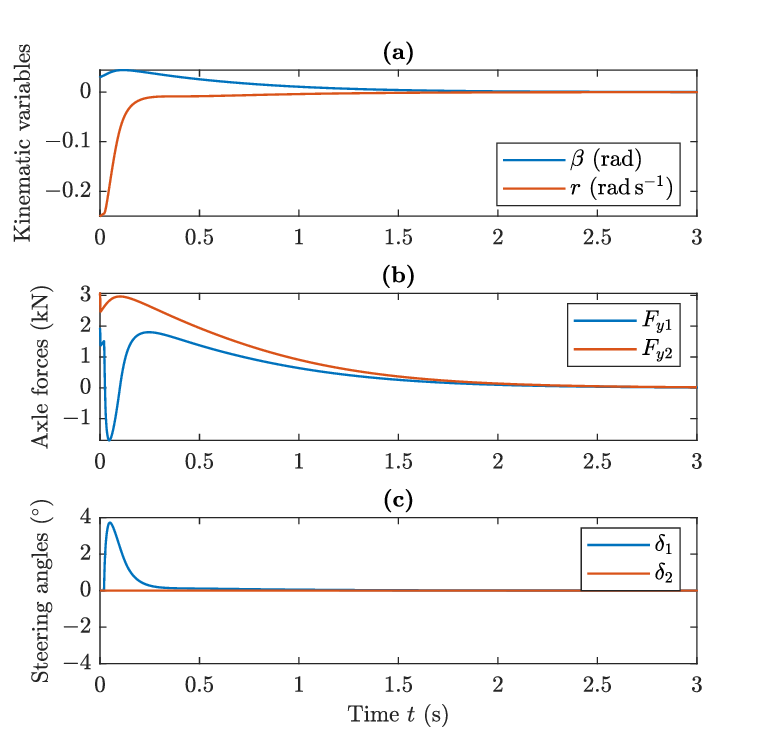} 
\caption{Closed loop behavior of the lumped states and steering inputs, for $\alpha^\star = 0$ and $F(0)$ as in~\eqref{Fnum}: \textbf{(a)} kinematic variables; \textbf{(b)} axle forces; \textbf{(c)} steering inputs.}
\label{figure:StatesClosed}
\end{figure}
\begin{figure*}
\centering
\includegraphics[width=0.7\linewidth]{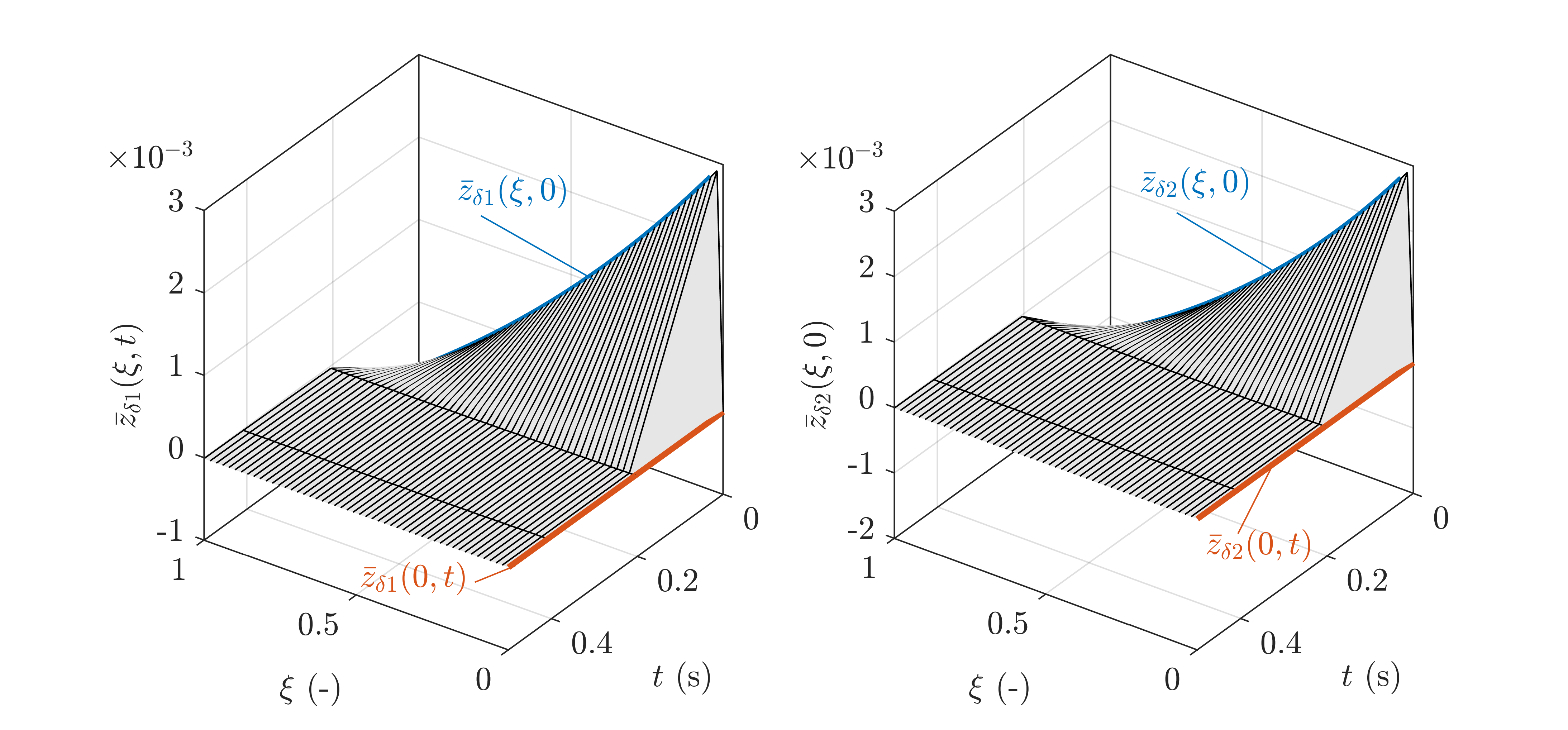} 
\caption{Evolution of the PDE states $z_\delta(\xi,t)$, along with its IC (blue lines) and BC (orange lines).}
\label{figure:PDEs}
\end{figure*}

The results asserted by Theorem~\ref{thm:main2} are of a local nature: ICs deviating largely from the target equilibrium may destabilize the closed-loop system's response. Additionally, relatively large values of the parameter $\tau_z$ -- which is essentially determined by the longitudinal speed $v_x$ for constant $L_1$ and $L_2$ -- may invalidate the conclusion of the singular perturbation analysis. In this context, the influence of different ICs $X_0 = -k[0.03\; -0.25\, \textnormal{rad}\,\textnormal{s}^{-1}]^{\mathrm{T}}$ on the closed-loop dynamics of the ODE-PDE system is illustrated in Figure~\ref{figure:ICs} for $k=1$, 3, and 6. As intuitively expected, ICs that are further from the target equilibrium may slow down the convergence of $\norm{(X(t)-X^\star, z_\delta(\cdot,t))}_{\altmathcal{X}}$, which is particularly noticeable for $k=6$. Finally, the effect of different values of $v_x$ is investigated graphically in Figures~\ref{figure:vx} and~\ref{figure:vx5}. In particular, considering $v_x = 10$, 20, and 50 $\textnormal{m}\,\textnormal{s}^{-1}$, Figure~\ref{figure:vx} demonstrates that the synthesized controller can successfully stabilize the ODE-PDE interconnection~\eqref{eq:originalSystems} on a wide range of longitudinal speeds.
A markedly different situation is illustrated in Figure~\ref{figure:vx5} for $v_x=1$ $\textnormal{m}\,\textnormal{s}^{-1}$. At this velocity, the open-loop ODE-PDE interconnection is asymptotically stable, although the time-scale indicator satisfies $\tau_z/\tau_X(0)\approx9.2$, showing that the tire and reduced vehicle dynamics are not separated. A feedback gain was subsequently selected to place the poles of the reduced closed-loop subsystem at $-5$ and $-7.5$ $\textnormal{s}^{-1}$, for which $\tau_z/\tau_X^*(0)\approx0.5$. Although the assigned reduced-order poles are real and strictly negative, the full ODE-PDE interconnection exhibits a more oscillatory transient than the open-loop system. This demonstrates that, outside a regime of pronounced time-scale separation, the reduced-order poles alone do not determine the transient response of the full interconnection. The feedback gain also modifies the coupling with the distributed tire dynamics through $A_2+G_2F(0)$. The results in Figure~\ref{figure:vx5} were obtained without input delay.

\begin{figure}
\centering
\includegraphics[width=0.9\linewidth]{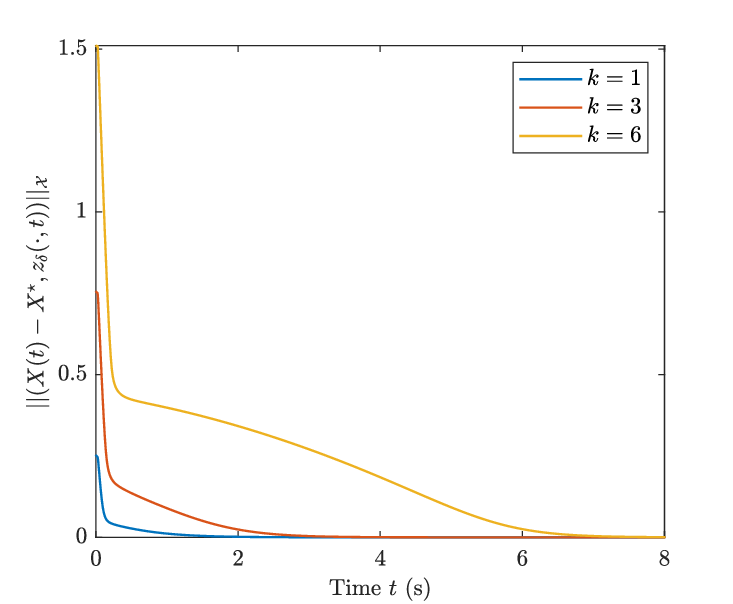} 
\caption{Closed loop behavior of $\norm{(X(t)-X^\star,z_\delta(\cdot,t))}_{\altmathcal{X}}$ for different values of $k = 1$, 3, and 6.}
\label{figure:ICs}
\end{figure}

\begin{figure}
\centering
\includegraphics[width=0.9\linewidth]{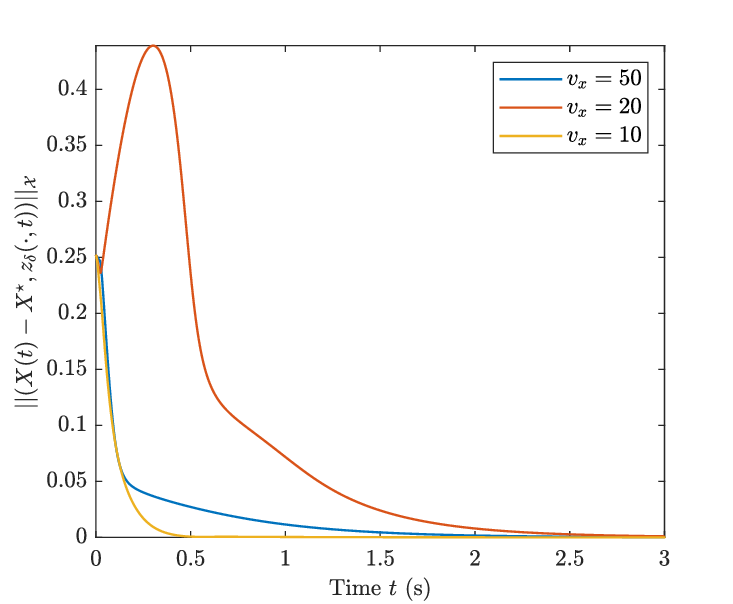} 
\caption{Closed loop behavior of $\norm{(X(t)-X^\star,z_\delta(\cdot,t))}_{\altmathcal{X}}$ for different values of $v_x = 10$, 20, and 50 $\textnormal{m}\,\textnormal{s}^{-1}$.}
\label{figure:vx}
\end{figure}

\begin{figure}
\centering
\includegraphics[width=0.9\linewidth]{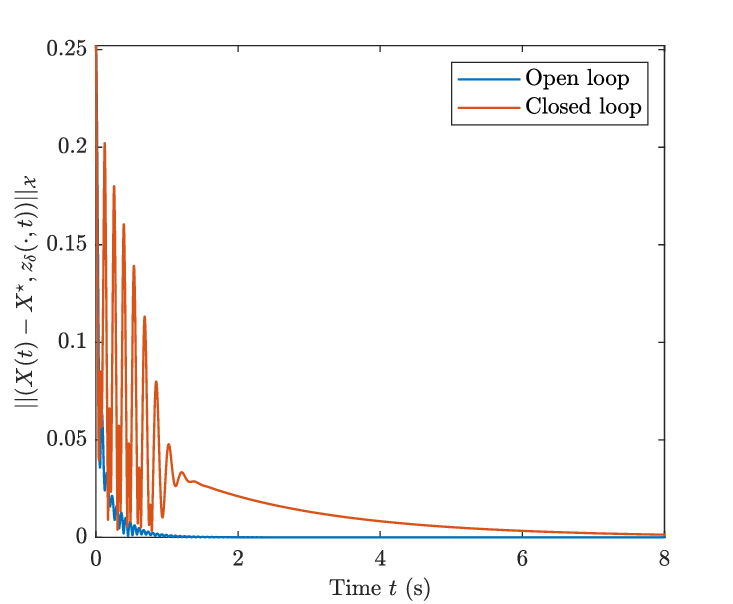} 
\caption{Open and closed loop behavior of $\norm{(X(t)-X^\star,z_\delta(\cdot,t))}_{\altmathcal{X}}$ for $v_x = 1$ $\textnormal{m}\,\textnormal{s}^{-1}$.}
\label{figure:vx5}
\end{figure}

\section{Conclusions}\label{sect:conclusion}

The present paper investigated the stability and stabilization of semilinear single-track vehicle models with distributed tire friction dynamics using singular perturbation analysis. Specifically, it was shown that when an appropriate time-scale parameter -- defined as the ratio between a characteristic length scale of the rolling contact problem and the vehicle's longitudinal speed -- is sufficiently small, standard finite-dimensional methods can successfully be applied to study local stability and stabilizability. Whilst the paper's results do not introduce new controller design strategies, they provide the first rigorous mathematical justification for long-standing practices in automotive research, establishing fundamental advances in vehicle dynamics and opening new perspectives for the control of semilinear ODE-PDE systems with distributed friction.

Future research will explore the design of nonlinear controllers, possibly in conjunction with adaptive observers for vehicle state and friction estimation, along with their impact on the dynamics of the full ODE-PDE interconnection. Additionally, the behavior of more complex vehicle models, incorporating the effect of large steering angles and suspension dynamics, should be studied.

\subsubsection*{Acknowledgments}
This research was financially supported by the project FASTEST (Reg. no. 2023-06511), funded by the Swedish Research Council.

\subsubsection*{Declaration of interest}
Declaration of interest: none.

{\setstretch{1} 
\printbibliography

@book{Pacejka2,
  author    = {Pacejka, H. B.},
  title     = {Tire and Vehicle Dynamics},
  edition   = {3},
  publisher = {Elsevier Butterworth-Heinemann},
  address   = {Amsterdam},
  year      = {2012}
}

@book{Guiggiani,
  author    = {Guiggiani, M.},
  title     = {The Science of Vehicle Dynamics},
  edition   = {3},
  publisher = {Springer},
  address   = {Cham},
  year      = {2023},
  doi       = {10.1007/978-3-031-06461-6}
}

@book{LibroMio,
  author    = {Romano, L.},
  title     = {Advanced Brush Tyre Modelling},
  series    = {SpringerBriefs in Applied Sciences},
  publisher = {Springer},
  address   = {Cham},
  year      = {2022},
  doi       = {10.1007/978-3-030-98435-9}
}

@article{Mojtaba1,
  author  = {Sharifzadeh, M. and Akbari, A. and Timpone, F. and Daryani, R.},
  title   = {Vehicle tyre/road interaction modeling and identification of its parameters using real-time trust-region methods},
  journal = {IFAC-PapersOnLine},
  volume  = {49},
  number  = {3},
  pages   = {111--116},
  year    = {2016},
  doi     = {10.1016/j.ifacol.2016.07.019}
}

@article{Mojtaba3,
  author  = {Sharifzadeh, M. and Senatore, A. and Farnam, A. and Akbari, A. and Timpone, F.},
  title   = {A real-time approach to robust identification of tyre--road friction characteristics on mixed-$\mu$ roads},
  journal = {Vehicle System Dynamics},
  volume  = {57},
  number  = {9},
  pages   = {1338--1360},
  year    = {2019},
  doi     = {10.1080/00423114.2018.1504974}
}

@inproceedings{Shao1,
  author    = {Shao, L. and Jin, C. and Lex, C. and Eichberger, A.},
  title     = {Nonlinear adaptive observer for side slip angle and road friction estimation},
  booktitle = {Proceedings of the IEEE Conference on Decision and Control},
  pages     = {6258--6265},
  year      = {2016}
}

@article{Shao3,
  author  = {Shao, L. and Jin, C. and Lex, C. and Eichberger, A.},
  title   = {Robust road friction estimation during vehicle steering},
  journal = {Vehicle System Dynamics},
  volume  = {57},
  number  = {4},
  pages   = {493--519},
  year    = {2019}
}

@article{Basilio2,
  author  = {Guo, N. and Zhang, X. and Zou, Y. and Lenzo, B.},
  title   = {A supervisory control strategy of distributed drive electric vehicles for coordinating handling, lateral stability, and energy efficiency},
  journal = {IEEE Transactions on Transportation Electrification},
  volume  = {7},
  number  = {4},
  pages   = {2488--2504},
  year    = {2021}
}

@article{Basilio4,
  author  = {Lenzo, B. and Zanchetta, M. and Sorniotti, A.},
  title   = {Yaw Rate and Sideslip Angle Control Through Single Input Single Output Direct Yaw Moment Control},
  journal = {IEEE Transactions on Control Systems Technology},
  year    = {2020}
}

@article{LateralControl,
  author  = {Li, Z. and Chen, H. and Liu, H. and Wang, P. and Gong, X.},
  title   = {Integrated Longitudinal and Lateral Vehicle Stability Control for Extreme Conditions With Safety Dynamic Requirements Analysis},
  journal = {IEEE Transactions on Intelligent Transportation Systems},
  volume  = {23},
  number  = {10},
  pages   = {19285--19298},
  year    = {2022},
  doi     = {10.1109/TITS.2022.3152485}
}

@article{Savaresi,
  author  = {Tanelli, M. and Astolfi, A. and Savaresi, S. M.},
  title   = {Robust nonlinear output feedback control for brake by wire control systems},
  journal = {Automatica},
  volume  = {44},
  number  = {4},
  pages   = {1078--1087},
  year    = {2008},
  doi     = {10.1016/j.automatica.2007.08.020}
}

@article{FEM,
  author  = {Romano, L. and Målqvist, A.},
  title   = {Finite element modelling of linear rolling contact problems},
  journal = {ESAIM: M2AN},
  volume  = {58},
  number  = {5},
  pages   = {	1541--1579},
  year    = {2024},
  doi     = {10.1051/m2an/2024052}
}

@article{Gerdes3,
  author  = {Talbot, J. and Brown, M. and Gerdes, J. C.},
  title   = {Shared Control Up to the Limits of Vehicle Handling},
  journal = {IEEE Transactions on Intelligent Vehicles},
  volume  = {9},
  number  = {1},
  pages   = {2977--2987},
  year    = {2023},
  doi     = {10.1109/TIV.2023.3300989}
}

@article{IEEEVT1,
  author  = {Du, H. and Zhang, N. and Dong, G.},
  title   = {Stabilizing Vehicle Lateral Dynamics With Considerations of Parameter Uncertainties and Control Saturation Through Robust Yaw Control},
  journal = {IEEE Transactions on Vehicular Technology},
  volume  = {59},
  number  = {5},
  pages   = {2593--2597},
  year    = {2010},
  doi     = {10.1109/TVT.2010.2045520}
}

@article{IEEEVT2,
  author  = {Zhou, H. and Jia, F. and Jing, H. and Liu, Z. and G{\"u}ven{\c{c}}, L.},
  title   = {Coordinated Longitudinal and Lateral Motion Control for Four Wheel Independent Motor-Drive Electric Vehicle},
  journal = {IEEE Transactions on Vehicular Technology},
  volume  = {67},
  number  = {5},
  pages   = {3782--3790},
  year    = {2018},
  doi     = {10.1109/TVT.2018.2816936}
}

@article{IEEEVT3,
  author  = {Nam, K. and Fujimoto, H. and Hori, Y.},
  title   = {Lateral Stability Control of In-Wheel-Motor-Driven Electric Vehicles Based on Sideslip Angle Estimation Using Lateral Tire Force Sensors},
  journal = {IEEE Transactions on Vehicular Technology},
  volume  = {61},
  number  = {5},
  pages   = {1972--1985},
  year    = {2012},
  doi     = {10.1109/TVT.2012.2191627}
}

@article{Takacs5,
  author  = {Tak{\'a}cs, D. and St{\'e}p{\'a}n, G.},
  title   = {Contact patch memory of tyres leading to lateral vibrations of four-wheeled vehicles},
  journal = {Philosophical Transactions of the Royal Society A},
  volume  = {371},
  pages   = {20120427},
  year    = {2013},
  doi     = {10.1098/rsta.2012.0427}
}

@article{Takacs2,
  author  = {Tak{\'a}cs, D. and St{\'e}p{\'a}n, G.},
  title   = {Micro-shimmy of towed structures in experimentally uncharted unstable parameter domain},
  journal = {Vehicle System Dynamics},
  volume  = {50},
  number  = {11},
  pages   = {1613--1630},
  year    = {2012}
}

@article{Takacs1,
  author  = {Tak{\'a}cs, D. and Orosz, G. and St{\'e}p{\'a}n, G.},
  title   = {Delay effects in shimmy dynamics of wheels with stretched string-like tyres},
  journal = {European Journal of Mechanics A/Solids},
  volume  = {28},
  number  = {3},
  pages   = {516--525},
  year    = {2009}
}

@article{Takacs3,
  author  = {Tak{\'a}cs, D. and St{\'e}p{\'a}n, G. and Hogan, S. J.},
  title   = {Isolated large amplitude periodic motions of towed rigid wheels},
  journal = {Nonlinear Dynamics},
  volume  = {52},
  pages   = {27--34},
  year    = {2008},
  doi     = {10.1007/s11071-007-9253-y}
}

@book{Zwart,
  author    = {Curtain, R. and Zwart, H.},
  title     = {Introduction to Infinite-Dimensional Systems Theory: A State-Space Approach},
  edition   = {1},
  publisher = {Springer},
  address   = {New York},
  year      = {2020}
}

@article{Weiss,
  author  = {Weiss, G. and Curtain, R.},
  title   = {Dynamic stabilization of regular linear systems},
  journal = {IEEE Trans. Autom. Control},
  volume  = {42},
  number  = {1},
  pages   = {4--21},
  year    = {1997},
  doi     = {10.1109/9.553684}
}

@book{Coron,
  author    = {Bastin, G. and Coron, J.-M.},
  title     = {Stability and Boundary Stabilization of 1-D Hyperbolic Systems},
  publisher = {Birkh{\"a}user},
  address   = {Cham},
  year      = {2016},
  doi       = {10.1007/978-3-319-32062-5}
}

@book{KrsticBook1,
  author    = {Krsti{\'c}, M. and Smyshlyaev, A.},
  title     = {Boundary Control of PDEs: A Course on Backstepping Designs},
  publisher = {SIAM},
  year      = {2008}
}

@book{KrsticBook2,
  author    = {Vazquez, R. and Krsti{\'c}, M.},
  title     = {Control of Turbulent and Magnetohydrodynamic Channel Flows},
  publisher = {Birkh{\"a}user},
  address   = {Boston},
  year      = {2007}
}

@book{OleBook,
  author    = {Anfinsen, H. and Aamo, O. M.},
  title     = {Adaptive Control of Hyperbolic PDEs},
  publisher = {Springer},
  address   = {Cham},
  year      = {2019}
}

@article{Vazquez,
  author  = {Vazquez, R. and Krsti{\'c}, M.},
  title   = {Explicit integral operator feedback for local stabilization of nonlinear thermal convection loop PDEs},
  journal = {Syst. Control Lett.},
  volume  = {55},
  number  = {8},
  pages   = {624--632},
  year    = {2006},
  doi     = {10.1016/j.sysconle.2005.09.019}
}

@article{Tikhonov1,
  author  = {Tang, Y. and Prieur, C. and Girard, A.},
  title   = {Tikhonov theorem for linear hyperbolic systems},
  journal = {Automatica},
  volume  = {57},
  pages   = {1--10},
  year    = {2015},
  doi     = {10.1016/j.automatica.2015.03.028}
}

@article{Tikhonov2,
  author  = {Tang, Y. and Mazanti, G.},
  title   = {Stability analysis of coupled linear ODE-hyperbolic PDE systems with two time scales},
  journal = {Automatica},
  volume  = {85},
  pages   = {386--396},
  year    = {2017},
  doi     = {10.1016/j.automatica.2017.07.052}
}

@inproceedings{Tikhonov3,
  author    = {Cerpa, E. and Prieur, C.},
  title     = {Effect of time scales on stability of coupled systems involving the wave equation},
  booktitle = {Proc. IEEE Conf. Decis. Control},
  pages     = {1236--1241},
  year      = {2017},
  doi       = {10.1109/CDC.2017.8263825}
}

@article{DistrLuGre,
  author  = {Romano, L. and Aamo, O. M. and {\AA}slund, J. and Frisk, E.},
  title   = {Stability and dissipativity of the distributed LuGre friction model},
  journal = {IEEE Control Syst. Lett.},
  year    = {2025},
  doi     = {10.1109/LCSYS.2025.3572419}
}

@unpublished{FrBD,
  author  = {Romano, L. and Aamo, O. M. and {\AA}slund, J. and Frisk, E.},
  title   = {First-order friction models with bristle dynamics: lumped and distributed formulations},
journal = {arXiv preprint},
  note    = {Submitted to IEEE Trans. Control Syst. Technol.},
  year    = {2026}
}

@article{Rill,
  author  = {Rill, G. and Schaeffer, T. and Schuderer, M.},
  title   = {LuGre or not LuGre},
  journal = {Multibody Syst. Dyn.},
  volume  = {60},
  pages   = {191--218},
  year    = {2024},
  doi     = {10.1007/s11044-023-09909-5}
}

@article{Rill0,
  author  = {Schuderer, M. and Rill, G. and Schaeffer, T.},
  title   = {Friction modeling from a practical point of view},
  journal = {Multibody Syst. Dyn.},
  volume  = {63},
  pages   = {141--158},
  year    = {2025},
  doi     = {10.1007/s11044-024-09978-0}
}

@article{BicyclePDE,
  author  = {Romano, L. and Aamo, O. M. and {\AA}slund, J. and Frisk, E.},
  title   = {Stability analysis of linear single-track models with transient tyre dynamics},
  journal = {Veh. Syst. Dyn.},
  pages   = {1--35},
  year    = {2024},
  doi     = {10.1080/00423114.2024.2445163}
}

@article{SemilinearV,
  author  = {Romano, L. and Aamo, O. M. and {Åslund}, J. and Frisk, E.},
  title   = {Semilinear single-track models with distributed tyre friction dynamics},
  journal = {Nonlinear Dyn.},
  year    = {2026},
  volume  = {114},
number = {138}
}

@article{PassExp,
  author = {Romano, L. and Aamo, O. M. and Krsti{\'c}, M. and {\AA}slund, J. and Frisk, E.},
  title  = {Passivity-exploiting stabilization of semilinear single-track vehicle models with distributed tire friction dynamics},
  journal  = {Automatica},
 volume = {193},
number = {113223},
  year   = {2026}
}

@book{Pazy,
  author    = {Pazy, A.},
  title     = {Semigroups of Linear Operators and Applications to Partial Differential Equations},
  publisher = {Springer},
  address   = {New York},
  year      = {1983},
  doi       = {10.1007/978-1-4612-5561-1}
}

@inproceedings{TsiotrasConf,
  author    = {Canudas-de-Wit, C. and Tsiotras, P.},
  title     = {Dynamic tire friction models for vehicle traction control},
  booktitle = {Proc. IEEE Conf. Decis. Control},
  pages     = {3746--3751},
  year      = {1999},
  doi       = {10.1109/CDC.1999.827937}
}

@article{Tsiotras1,
  author  = {Canudas-de-Wit, C. and Tsiotras, P. and Velenis, E.},
  title   = {Dynamic Friction Models for Road/Tire Longitudinal Interaction},
  journal = {Veh. Syst. Dyn.},
  volume  = {39},
  number  = {3},
  pages   = {189--226},
  year    = {2003}
}

@article{Tsiotras2,
  author  = {Velenis, E. and Tsiotras, P. and Canudas-de-Wit, C. and Sorine, M.},
  title   = {Dynamic tyre friction models for combined longitudinal and lateral vehicle motion},
  journal = {Veh. Syst. Dyn.},
  volume  = {43},
  number  = {1},
  pages   = {3--29},
  year    = {2005}
}

@article{Deur0,
  author  = {Deur, J.},
  title   = {Modeling and Analysis of Longitudinal Tire Dynamics Based on the Lugre Friction Model},
  journal = {IFAC Proc. Vol.},
  volume  = {34},
  number  = {1},
  pages   = {91--96},
  year    = {2001},
  doi     = {10.1016/S1474-6670(17)34383-5}
}

@article{Deur1,
  author  = {Deur, J. and Asgari, J. and Hrovat, D.},
  title   = {A 3D Brush-type Dynamic Tire Friction Model},
  journal = {Veh. Syst. Dyn.},
  volume  = {42},
  number  = {3},
  pages   = {133--173},
  year    = {2004}
}

@article{Deur2,
  author  = {Deur, J. and Ivanovi{\'c}, V. and Troulis, M.},
  title   = {Extensions of the LuGre tyre friction model related to variable slip speed along the contact patch length},
  journal = {Veh. Syst. Dyn.},
  volume  = {43},
  pages   = {508--524},
  year    = {2005}
}

@article{Beregi1,
  author  = {Beregi, S. and Tak{\'a}cs, D. and St{\'e}p{\'a}n, G.},
  title   = {Tyre induced vibrations of the car--trailer system},
  journal = {J. Sound Vib.},
  volume  = {362},
  pages   = {215--227},
  year    = {2016}
}

@article{Beregi3,
  author  = {Beregi, S. and Tak{\'a}cs, D. and Gyebroszki, G.},
  title   = {Theoretical and experimental study on the nonlinear dynamics of wheel-shimmy},
  journal = {Nonlinear Dyn.},
  volume  = {98},
  pages   = {2581--2593},
  year    = {2019},
  doi     = {10.1007/s11071-019-05225-w}
}
}

\end{document}